\newtheorem{definition}{Definition}
\newtheorem{proposition}[definition]{Proposition}
\DeclareMathOperator{\Conf}{Conf}
\let\int\relax \DeclareMathOperator{\int}{int}
\DeclareMathOperator{\Loc}{Loc}
\let\det\relax \DeclareMathOperator{\det}{det}
\DeclareMathOperator{\Sub}{Sub}
\newcommand\eg{\emph{e.g.}\xspace}
\newcommand\ie{\emph{i.e.}\xspace}
\title{Cellular Automata and Kan Extensions}
\author{Alexandre Fernandez}
\author{Luidnel Maignan}
\author{Antoine Spicher}
\affil{Univ Paris Est Creteil, LACL, 94000, Creteil, France}
\date{}
\begin{document}

\maketitle

\begin{abstract}
    In this paper, we formalize precisely the sense in which the application of a cellular automaton to partial configurations is a natural extension of its local transition function through the categorical notion of Kan extension.
    In fact, the two possible ways to do such an extension and the ingredients involved in their definition are related through Kan extensions in many ways.
    These relations provide additional links between computer science and category theory, and also give a new point of view on the famous Curtis-Hedlung theorem of cellular automata from the extended topological point of view provided by category theory.
    These links also allow to relatively easily generalize concepts pioneered by cellular automata to arbitrary kind of possibly evolving spaces.
    No prior knowledge of category theory is assumed.
\end{abstract}

\section{Introduction}

Cellular automata are usually presented either as a local behavior extended to a global and uniform one or as a continuous uniform global behavior for the appropriate topology~\cite{CeccheriniSilberstein2009}.
We offer here a third, fruitful, point of view easing many generalizations of the concepts pioneered by cellular automata, \eg via so-called global transformation~\cite{DBLP:conf/gg/MaignanS15,DBLP:conf/uc/FernandezMS19}.
The goal of this paper is not to elaborate on these generalizations but to focus on some simple foundational bridges allowing these generalizations.
In particular, we focus on Kan extensions, a categorical notion allowing, as we show here, to capture local/global descriptions~\cite{maclane2013categories}.
While categories are generalizations of monoids and posets, the case of cellular automata can be fully treated in terms of posets only.
Once the involved structures made clear via posets, the transition to category is precisely what enables the generalizations in a surprisingly smooth way as discussed in the final section.

In this paper, we recall the direct definitions of cellular automata on groups, local transition function, global transition function, shift action, and also consider the counterparts of these functions on arbitrary partial configurations.
This bigger picture allows to show that the various local/global relations between these objects are all captured by left and right Kan extensions, the latter providing a alternative definition of these objects.
The proofs are provided in detail to show how the concept can be easily manipulated once understood.
We also introduce slightly more generality that one would typically need in order to enrich the presentation of Kan extensions in a hopefully useful way.
In the final section, we comment on the link with Curtis-Hedlung theorem and discuss briefly the smooth transition to more general systems where the space itself has to evolve.

\section{Cellular Automata and Kan Extensions}

Let us give some basic definitions to fix the notations.
We also note small caveats early on, to avoid having to deal with many unrelated details at the same time in a single proof or construction latter on.

\subsection{Cellular Automata}

\begin{definition}\label{def:group}
    A \emph{group} is a set $G$ with a binary operation $\-- \cdot \-- : G \times G \to G$ which is associative, which has a neutral element $1$ and for which any $g \in G$ has inverse $g^{-1}$.
    A right action of the group on a set $X$ is a binary operation $\-- \blacktriangleleft \-- : X \times G \to X$ such that $x \blacktriangleleft 1 = x$ and $(x \blacktriangleleft g) \blacktriangleleft h = x \blacktriangleleft (g \cdot h)$.
\end{definition}

In cellular automata, the group $G$ represents the space, each element $g \in G$ being at the same time an absolute and a relative position.
This space is decorated with states that evolve through local interactions only.
The classical formal definitions go as follows and work with the entire, often infinite, space.

\begin{definition}\label{def:ca}
    A \emph{cellular automaton} on a group $G$ is given by a \emph{neighborhood} $N \subseteq G$, a finite set of \emph{states} $Q$, and a \emph{local transition function} $\delta : Q^N \to Q$.
    The elements of the set $Q^N$ are called \emph{local configurations}.
    The elements of the set $Q^G$ are called \emph{global configurations} and a right action $\-- \blacktriangleleft \-- : Q^G \times G \to Q^G$ is defined on $Q^G$ by $(c \blacktriangleleft g)(h) = c(g \cdot h)$.
    The \emph{global transition function} $\Delta : Q^G \to Q^G$ of such a cellular automaton is defined as $\Delta(c)(g) = \delta((c \blacktriangleleft g) \restriction N)$.
\end{definition}

\begin{proposition}\label{prop:action}
    The latter right action is indeed a right action.
\end{proposition}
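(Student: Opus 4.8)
The plan is to verify directly the two defining equations of a right action from Definition~\ref{def:group}, namely $c \blacktriangleleft 1 = c$ and $(c \blacktriangleleft g) \blacktriangleleft h = c \blacktriangleleft (g \cdot h)$, for every global configuration $c \in Q^G$ and all $g, h \in G$. Since the elements of $Q^G$ are functions $G \to Q$, two such configurations are equal exactly when they agree at every point of $G$; so in each case I would fix an arbitrary $k \in G$ and compute the value of both sides at $k$, using only the definition $(c \blacktriangleleft g)(h) = c(g \cdot h)$ together with the group axioms of Definition~\ref{def:group}.

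For the first equation, I would compute $(c \blacktriangleleft 1)(k) = c(1 \cdot k) = c(k)$, where the last step is the neutral element law, which yields $c \blacktriangleleft 1 = c$. For the second equation, I would unfold the left-hand side twice, $((c \blacktriangleleft g) \blacktriangleleft h)(k) = (c \blacktriangleleft g)(h \cdot k) = c(g \cdot (h \cdot k))$, then apply associativity to rewrite $g \cdot (h \cdot k)$ as $(g \cdot h) \cdot k$, and finally refold to obtain $c((g \cdot h) \cdot k) = (c \blacktriangleleft (g \cdot h))(k)$. Since $k$ was arbitrary, the two configurations coincide.

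There is essentially no obstacle here; the only points requiring the slightest care are remembering that equality of configurations is pointwise and applying associativity in the correct direction, so that the nested product produced by iterating $\blacktriangleleft$ matches the single product $g \cdot h$ appearing on the right-hand side. The finiteness of $Q$ and the particular choice of neighborhood $N$ play no role in this statement, which is why it is isolated as a small caveat before the constructions that follow.
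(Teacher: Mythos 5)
Your proof is correct and follows exactly the same route as the paper's: pointwise evaluation at an arbitrary group element, unfolding the definition of $\blacktriangleleft$, and invoking associativity and the neutral element law. Nothing further is needed.
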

\begin{proof}
    For any $g, h \in G$, we have
    $
            ((c \blacktriangleleft g) \blacktriangleleft h)(i)
            = (c \blacktriangleleft g)(h.i)
            = c(g\cdot h\cdot i)
            = (c \blacktriangleleft (g\cdot h))(i)
    $
    for any $i \in G$, so $((c \blacktriangleleft g) \blacktriangleleft h) = (c \blacktriangleleft (g\cdot h))$ and also $(c\blacktriangleleft 1)(i) = c(1 \cdot i) = c(i)$ as required by Definition~\ref{def:group} of right actions.
\end{proof}

This choice of definition and right notation for the so called shift action has two advantages.
Firstly, the definition of the action is a simple associativity.
Secondly, when instantiated with $G = \mathbb{Z}$ with sum, the content of $c \blacktriangleleft 5$ is the content of $c$ shifted to the left, as the symbols indicates.
Indeed, for $c' = c \blacktriangleleft 5$, $c'(-5) = c(0)$ and $c'(0) = c(5)$.

\begin{proposition}\label{prop:neighborhood}
    For all $c \in Q^G$ and $g \in G$, $\Delta(c)(g)$ is only function of $c \restriction g \cdot N$.
\end{proposition}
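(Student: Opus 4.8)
The plan is to unfold the definition of $\Delta$ given in Definition~\ref{def:ca} and track exactly which values of $c$ are consulted. By definition, $\Delta(c)(g) = \delta((c \blacktriangleleft g) \restriction N)$, so $\Delta(c)(g)$ depends on $c$ only through the local configuration $(c \blacktriangleleft g) \restriction N \in Q^N$. It therefore suffices to show that this restricted shifted configuration is itself a function of $c \restriction g \cdot N$ alone.

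The key computation is to identify $(c \blacktriangleleft g) \restriction N$ with (a reindexing of) $c \restriction g \cdot N$. Concretely, for every $n \in N$ we have $\bigl((c \blacktriangleleft g) \restriction N\bigr)(n) = (c \blacktriangleleft g)(n) = c(g \cdot n)$, and as $n$ ranges over $N$ the element $g \cdot n$ ranges exactly over the set $g \cdot N = \{ g \cdot n : n \in N \}$. Hence the map $n \mapsto (c \blacktriangleleft g)(n)$ is completely determined by the restriction $c \restriction g \cdot N$: if $c, c' \in Q^G$ satisfy $c \restriction g \cdot N = c' \restriction g \cdot N$, then $c(g \cdot n) = c'(g \cdot n)$ for all $n \in N$, so $(c \blacktriangleleft g) \restriction N = (c' \blacktriangleleft g) \restriction N$, and applying $\delta$ gives $\Delta(c)(g) = \Delta(c')(g)$.

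I would present this as a short direct argument: fix $g \in G$, take two configurations agreeing on $g \cdot N$, and chase the equalities through the shift action and the restriction to conclude $\Delta$ produces the same value at $g$. The only minor subtlety — and it is barely an obstacle — is being careful that multiplication by $g$ on the left is a bijection $G \to G$ (its inverse is multiplication by $g^{-1}$), which is what guarantees that $n \mapsto g \cdot n$ is a bijection $N \to g \cdot N$ and hence that no value of $c$ outside $g \cdot N$ is needed and every value inside is potentially needed. This follows immediately from the group axioms in Definition~\ref{def:group}, so no real difficulty arises; the statement is essentially a bookkeeping consequence of the definitions.
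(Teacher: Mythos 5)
Your proposal is correct and follows essentially the same route as the paper: unfold $\Delta(c)(g) = \delta((c \blacktriangleleft g) \restriction N)$ and observe that $(c \blacktriangleleft g)(n) = c(g \cdot n)$ for $n \in N$, so only the values of $c$ on $g \cdot N$ are consulted. Your version merely spells out the "two configurations agreeing on $g \cdot N$" formulation more explicitly; no further comment is needed.
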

\begin{proof}
    Indeed, $\Delta(c)(g) = \delta((c \blacktriangleleft g) \restriction N)$ so the value is determined by $(c \blacktriangleleft g) \restriction N$.
    But for any $n \in N$, $(c \blacktriangleleft g)(n) = c(g\cdot n)$ by definition of $\blacktriangleleft$.
\end{proof}

In common cellular automata terms, this proposition means that the \emph{neighborhood of $g$} is $g \cdot N$, in this order.
Let us informally call objects of the form $c \restriction g\cdot N \in \bigcup_{g \in G}Q^{g\cdot N}$ a \emph{shifted local configuration}.
Note that, at our level of generality, two different positions $g \neq g' \in G$ might have the same neighborhood $g\cdot N = g' \cdot N$.
Although the injectivity of the function $(\--\cdot N)$ could be a useful constraints to add, which is often verified in practice, we do not impose it so the reader should keep this in mind.
A second thing to keep in mind is that we do not require here that the neighborhood should be finite.
This is done only because this property is not used in the formal development below.

\begin{proposition}\label{prop:non-inj-neighb}
The function $\-- \cdot N : G \to \mathbf{2}^G$ is not necessarily injective.
\end{proposition}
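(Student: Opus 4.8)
The plan is simply to exhibit an explicit counterexample: a group $G$ and a neighborhood $N \subseteq G$ together with two distinct positions $g \neq g'$ such that $g \cdot N = g' \cdot N$. The most economical choice is to let $N$ be all of $G$, for any group $G$ having at least two elements. Since, by the group axioms of Definition~\ref{def:group}, left multiplication $x \mapsto g \cdot x$ is a bijection of $G$ (with inverse $x \mapsto g^{-1} \cdot x$), one gets $g \cdot N = g \cdot G = G$ for \emph{every} $g \in G$; hence $(\-- \cdot N)$ is the constant map with value $G$, and any two distinct elements of $G$ witness its non-injectivity.

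If one prefers the neighborhood to remain finite --- the paper drops this requirement only for convenience --- it is enough to run the same argument over a finite group, for instance $G = \mathbb{Z}/2\mathbb{Z}$ with $N = \{0, 1\} = G$, where $0 \cdot N = \{0,1\} = 1 \cdot N$ although $0 \neq 1$. A marginally less degenerate illustration is $G = \mathbb{Z}/4\mathbb{Z}$ with $N = \{0, 2\}$, its unique nontrivial proper subgroup: here $0 \cdot N = \{0, 2\} = 2 \cdot N$ while $0 \neq 2$, which also shows that distinct positions may share a neighborhood even when $N$ is neither empty nor the whole group.

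I do not expect any real obstacle in this argument. The only points deserving a line of care are: checking that the chosen $G$ genuinely has two distinct elements, so that the asserted failure of injectivity is actually exhibited; and recalling the elementary fact that left translation by a fixed group element is a bijection of $G$, which is what makes $g \cdot G = G$ hold and can be read directly off Definition~\ref{def:group}.
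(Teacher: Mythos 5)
Your proof is correct and takes essentially the same approach as the paper: exhibiting explicit counterexamples, the first ($N = G$) being literally the paper's own, and the torsion-based finite examples playing the same role as the paper's $\mathbb{Z}/2\mathbb{Z} \times \mathbb{Z}$ example. You are in fact slightly more careful than the paper in noting that $G$ must have at least two elements for the $N=G$ example to witness non-injectivity.
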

\begin{proof}
    Considering any group $G$ and $N = G$, we have $g\cdot N = G$ for any $g \in G$.
    Considering the group $G = \mathbb{Z}/2\mathbb{Z} \times \mathbb{Z}$ and $N = \{ (0, 0), (1, 0) \}$, we have $(0,0) + N = (1,0) + N = \{ (0, 0), (1, 0) \}$ because of the torsion.
\end{proof}

Because of this, it is useful to replace the shifted local configurations, \ie the union $\bigcup_{g \in G}Q^{g\cdot N}$, by the disjoint union $\bigcup_{g \in G}(\{g\} \times Q^{g \cdot N})$.
The elements of the latter are of the form $(g \in G, c \in Q^{g\cdot N})$ and keep track of the considered ``center'' of the neighborhood.
More explicitly, two elements $(g_0, c \restriction g_0 \cdot N), (g_1, c \restriction g_1 \cdot N) \in \bigcup_{g \in G}(\{g\} \times Q^{g \cdot N})$ are different as soon as $g_0 \neq g_1$ even if $g_0 \cdot N = g_1 \cdot N$.
This encodes things according to the intuition of a centered neighborhood.

\subsection{The Poset of (Partial) Configurations}

In the previous formal statements, one sees different kinds of configurations, explicitly or implicitly: global configurations $c \in Q^G$, local configurations $(c \blacktriangleleft g) \restriction N \in Q^N$, shifted local configurations $c \restriction g \cdot N \in Q^{g\cdot N}$, and their resulting ``placed states'' $(g \mapsto \Delta(c)(g)) \in Q^{\{g\}}$.
In the cellular automata literature, one often considers configurations defined on other subsets of the space, \eg finite connected subsets.
More generally, we are interested in all partial configurations $Q^S$ for arbitrary subsets $S \subseteq G$.
The restriction operation $(\-- \restriction \--)$ used many times above gives a partial ordering of these partial configurations.

\begin{definition}\label{def:conf}
    A \emph{(partial) configuration} $c$ is a partial function from $G$ to $Q$.
    Its domain of definition is denoted $|c|$ and called its \emph{support}.
    The set of all configurations is denoted $\Conf = \bigcup_{S \subseteq G}Q^S$.
    We extend the previous right action $\blacktriangleleft$ and define it to map each $c \in \Conf$ to $c \blacktriangleleft g$ having support $|c \blacktriangleleft g| = \{ h \in G \mid g \cdot h \in |c| \}$ and states $(c \blacktriangleleft g)(h) = c(g \cdot h)$.
\end{definition}

\begin{proposition}
    The latter right action is well-defined and is a right action.
\end{proposition}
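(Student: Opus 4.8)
The plan is to treat configurations carefully as \emph{partial} functions: an equality $c_1 = c_2$ in $\Conf$ must be established by showing both that $|c_1| = |c_2|$ and that $c_1(h) = c_2(h)$ for every $h$ in this common support. With that convention fixed, I would first check well-definedness, then verify the two right-action axioms of Definition~\ref{def:group}.

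For well-definedness, I would observe that the prescription in Definition~\ref{def:conf} is sound: given $c \in \Conf$ and $g \in G$, the set $|c \blacktriangleleft g| = \{h \in G \mid g \cdot h \in |c|\}$ is a subset of $G$, and for every $h$ in it we have $g \cdot h \in |c|$ by construction, so $c(g \cdot h)$ is defined. Hence $c \blacktriangleleft g$ is a genuine partial function from $G$ to $Q$ with the announced support, i.e.\ an element of $\Conf$.

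For the neutral element, I would compute $|c \blacktriangleleft 1| = \{h \mid 1 \cdot h \in |c|\} = \{h \mid h \in |c|\} = |c|$ and then, for $h \in |c|$, $(c \blacktriangleleft 1)(h) = c(1 \cdot h) = c(h)$, so $c \blacktriangleleft 1 = c$. For compatibility with the product, fix $g, h \in G$ and first settle the supports: for any $i \in G$,
\[
  i \in |(c \blacktriangleleft g) \blacktriangleleft h|
  \iff h \cdot i \in |c \blacktriangleleft g|
  \iff g \cdot (h \cdot i) \in |c|
  \iff (g \cdot h) \cdot i \in |c|
  \iff i \in |c \blacktriangleleft (g \cdot h)|,
\]
where the middle equivalence uses associativity of $\cdot$; hence $|(c \blacktriangleleft g) \blacktriangleleft h| = |c \blacktriangleleft (g \cdot h)|$. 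Then, for $i$ in this common support, $((c \blacktriangleleft g) \blacktriangleleft h)(i) = (c \blacktriangleleft g)(h \cdot i) = c(g \cdot (h \cdot i)) = c((g \cdot h) \cdot i) = (c \blacktriangleleft (g \cdot h))(i)$, which is the same computation as in the proof of Proposition~\ref{prop:action}, now carried out only where both sides are defined. This yields $(c \blacktriangleleft g) \blacktriangleleft h = c \blacktriangleleft (g \cdot h)$.

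There is no genuine obstacle here; the only point requiring attention — and the reason this statement is worth isolating from Proposition~\ref{prop:action} — is the extra bookkeeping of supports, since for partial functions one cannot argue pointwise before knowing that the domains coincide. As above, only associativity and the neutral element are needed; inverses play no role.
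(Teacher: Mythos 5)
Your proof is correct and follows essentially the same route as the paper: the well-definedness check is identical, and the action axioms are verified by the same computation as in Proposition~\ref{prop:action}, which the paper simply cites. The only difference is that you explicitly verify the equality of supports before arguing pointwise, a bookkeeping step the paper leaves implicit; this is a welcome bit of extra care but not a different argument.
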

\begin{proof}
    The configuration $c \blacktriangleleft g$ is well-defined on all of its support.
    Indeed for any $h \in |c \blacktriangleleft g|$, $(c \blacktriangleleft g)(h) = c(g \cdot h)$ and $g \cdot h \in |c|$ by definition of $h$.
    The right action property is verified as in the proof of Proposition~\ref{prop:action}.
\end{proof}

Let us restate Proposition~\ref{prop:neighborhood} more precisely using Definition~\ref{def:conf}.
\begin{proposition}\label{prop:neighborhood-bis}
    For all $c \in Q^G$ and $g \in G$, $(c \blacktriangleleft g) \restriction N = (c \restriction g \cdot N) \blacktriangleleft g$.
\end{proposition}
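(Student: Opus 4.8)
The plan is to prove the equality of two configurations by showing they have the same support and agree pointwise. Both sides are partial configurations, so I first compute their supports explicitly from the definitions, then verify the values coincide on that common support.

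First I would unfold the left-hand side. Since $c \in Q^G$ is a total configuration, $c \blacktriangleleft g$ has support $\{h \in G \mid g \cdot h \in G\} = G$, so $c \blacktriangleleft g$ is total. Restricting to $N$ gives support exactly $N$, and $((c \blacktriangleleft g) \restriction N)(n) = (c \blacktriangleleft g)(n) = c(g \cdot n)$ for each $n \in N$. Next I would unfold the right-hand side: $c \restriction g \cdot N$ has support $g \cdot N$, and then applying $\blacktriangleleft g$ produces a configuration with support $\{h \in G \mid g \cdot h \in g \cdot N\}$, which equals $N$ since left multiplication by $g$ is a bijection on $G$ (using the group axioms from Definition~\ref{def:group}). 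For $h$ in this support, $((c \restriction g \cdot N) \blacktriangleleft g)(h) = (c \restriction g \cdot N)(g \cdot h) = c(g \cdot h)$, the last step being valid precisely because $g \cdot h \in g \cdot N = |c \restriction g \cdot N|$.

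Having shown both sides have support $N$ and both send $n \mapsto c(g \cdot n)$, I conclude they are equal as partial functions. The only mildly delicate point — the ``main obstacle'', though it is minor — is being careful with the support computation for the right-hand side: one must observe that $g \cdot h \in g \cdot N \iff h \in N$, which relies on cancellation in the group, and one must check that the restriction $c \restriction g\cdot N$ is actually defined at $g \cdot h$ before evaluating it, so that the chain of equalities is legitimate rather than an evaluation of an undefined partial function. Everything else is a direct substitution using Definition~\ref{def:conf}.

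\begin{proof}
    Both sides are partial configurations; we show they have the same support and agree there. On the left, since $c \in Q^G$ is total, $c \blacktriangleleft g$ has support $\{h \mid g \cdot h \in G\} = G$, hence $(c \blacktriangleleft g) \restriction N$ has support $N$, with $((c \blacktriangleleft g) \restriction N)(n) = (c \blacktriangleleft g)(n) = c(g \cdot n)$ for all $n \in N$. On the right, $c \restriction g \cdot N$ has support $g \cdot N$, so $(c \restriction g \cdot N) \blacktriangleleft g$ has support $\{h \mid g \cdot h \in g \cdot N\}$; since left multiplication by $g$ is a bijection of $G$ (Definition~\ref{def:group}), $g \cdot h \in g \cdot N$ iff $h \in N$, so this support is also $N$. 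For $h \in N$ we then have $g \cdot h \in g \cdot N = |c \restriction g \cdot N|$, so $((c \restriction g \cdot N) \blacktriangleleft g)(h) = (c \restriction g \cdot N)(g \cdot h) = c(g \cdot h)$. Thus both sides have support $N$ and send each $n \in N$ to $c(g \cdot n)$, hence they are equal.
\end{proof}
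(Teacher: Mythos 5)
Your proof is correct and follows essentially the same route as the paper's: compute the support of each side (both equal $N$, using cancellation for $\{h \mid g\cdot h \in g\cdot N\} = N$) and then check that both sides send $n$ to $c(g\cdot n)$. The extra care you take about where the partial functions are defined is sound but does not change the argument.
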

\begin{proof}
    Indeed, $|(c \restriction g \cdot N) \blacktriangleleft g| = \{ h \in G | g \cdot h \in |(c \restriction g \cdot N)| \} = \{ h \in G | g \cdot h \in g \cdot N \} = N = |(c\blacktriangleleft g) \restriction N|$.
    Also for any $n \in N$, $((c\blacktriangleleft g) \restriction N)(n) = (c\blacktriangleleft g)(n) = c(g\cdot n)$ and $((c \restriction g \cdot N) \blacktriangleleft g)(n) = (c \restriction g \cdot N)(g\cdot n) = c(g\cdot n)$.
\end{proof}

\begin{definition}
    A \emph{partial order} on a set $X$ is a binary relation ${\preceq} \subseteq X \times X$ which is reflexive, transitive, and antisymmetric.
    A set endowed with a partial order is called a \emph{partially ordered set}, or \emph{poset} for short.
\end{definition}

\begin{definition}\label{def:preceq}
    Given any two configurations $c, c' \in  \Conf$, we set $c \preceq c'$ if and only if $\forall g \in |c|, g \in |c'| \wedge c(g) = c'(g)$.
    This is read ``$c$ is a subconfiguration of $c'$'' or ``$c'$ is a superconfiguration of $c$''.
\end{definition}

\begin{proposition}\label{prop:poset-world}
    The set $\Conf$ with this binary relation is a poset.
    In this poset, the shifted local configurations $c \in \bigcup_{g\in G}Q^{g\cdot N}$ are subconfigurations of the (appropriate) global configurations $c' \in Q^G$.
    Shifted local configurations form an antichain.
    Global configurations form an antichain.
\end{proposition}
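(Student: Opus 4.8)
The plan is to verify the three order axioms directly from Definition~\ref{def:preceq}, and then to derive the subconfiguration claim and the two antichain claims from the shape of the supports of the configurations involved.

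For the poset structure, reflexivity is immediate ($|c| \subseteq |c|$ and $c(g) = c(g)$ for $g \in |c|$). Transitivity: if $c \preceq c'$ and $c' \preceq c''$, then for every $g \in |c|$ we have $g \in |c'|$ and $c(g) = c'(g)$, and in turn $g \in |c''|$ and $c'(g) = c''(g)$, whence $g \in |c''|$ with $c(g) = c''(g)$. Antisymmetry: from $c \preceq c'$ and $c' \preceq c$ we get $|c| \subseteq |c'| \subseteq |c|$, hence $|c| = |c'|$, and then $c$ and $c'$ agree pointwise on this common support, so $c = c'$. This is routine bookkeeping.

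For the subconfiguration claim, let $c$ be a shifted local configuration, so $|c| = g\cdot N$ for some $g \in G$. Fixing any state $q_0 \in Q$ (the case $Q = \emptyset$ being degenerate), define $c' \in Q^G$ by $c'(h) = c(h)$ for $h \in g\cdot N$ and $c'(h) = q_0$ otherwise; then $|c| = g\cdot N \subseteq G = |c'|$ and $c, c'$ coincide on $|c|$, so $c \preceq c'$, and $c'$ is an appropriate global superconfiguration.

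Both antichain claims follow from the remark already used for antisymmetry: two comparable configurations with equal support are equal. Global configurations all have support $G$, so any two comparable ones coincide, hence distinct global configurations are incomparable. For shifted local configurations, the supports are the left-translates $g\cdot N$; if $c \preceq c'$ with $|c| = g\cdot N$ and $|c'| = g'\cdot N$ then $g\cdot N \subseteq g'\cdot N$, and since left translation by $(g')^{-1} g$ is a bijection of $G$ carrying $N$ onto a translate of itself, $g\cdot N$ and $g'\cdot N$ have the same cardinality. When $N$ is finite---the classical setting one should keep in mind, even though finiteness of the neighborhood was not imposed---a finite set contained in a set of the same cardinality must equal it, so $g\cdot N = g'\cdot N$ and therefore $c = c'$. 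I expect this last step, ruling out a proper inclusion $g\cdot N \subsetneq g'\cdot N$, to be the only real obstacle: for an infinite neighborhood such a proper inclusion can occur (take $G = \mathbb{Z}$, $N = \mathbb{N}$, $g = 1$, $g' = 0$), so either finiteness of $N$ is genuinely used here, or the statement should be read for the centered, disjoint-union presentation of shifted local configurations, where distinct elements with different centers are incomparable by construction.
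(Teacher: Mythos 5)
Your verification is correct and, frankly, more of a proof than the paper supplies: the paper's own argument consists of the single sentence that the claims ``can be readily seen,'' plus a reminder of what an antichain is. Your handling of the order axioms, the extension of a shifted local configuration to a global one by padding with a fixed state, and the antichain of global configurations (equal supports plus comparability forces equality) all match what the authors evidently have in mind.

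The one place where you go beyond the paper is also the one place where the paper's claim is actually in trouble, and you have diagnosed it correctly. Read literally, ``shifted local configurations'' means elements of the plain union $\bigcup_{g\in G}Q^{g\cdot N}$, and the antichain claim for these needs to rule out a proper inclusion $g\cdot N\subsetneq g'\cdot N$. Your cardinality argument does this when $N$ is finite, but the paper explicitly refuses to assume finiteness and even asserts that this property ``is not used in the formal development below.'' Your counterexample ($G=\mathbb{Z}$, $N=\mathbb{N}$, where $c'\restriction(1+\mathbb{N})$ and $c'$ are two distinct comparable shifted local configurations as soon as $Q$ is nonempty) shows the antichain statement is false at the paper's stated level of generality. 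The only rescue that does not reinstate finiteness is the one you mention: interpreting shifted local configurations as the centered, disjoint-union objects of $\Loc$ in Definition~\ref{def:loc}, whose order is declared trivial, so that the antichain property holds by fiat --- but that is a statement about $\Loc$, not about a subset of the poset $\Conf$ as the proposition asserts. So there is no gap in your argument; there is a (minor, apparently unexploited) gap in the paper's.
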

\begin{proof}
    As can be readily seen, since each global configuration restricts to many shifted local configurations, and recalling that an antichain is a subset $S$ of the poset such that neither $x \preceq x'$ nor $x' \preceq x$ hold for any two different $x,x' \in S$.
\end{proof}

\subsection{Kan Extensions (in the 2-Category of Posets)}

Given three sets $A$, $B$ and $C$ such that $A \subseteq B$, we say that a function $g : B \to C$ extends a function $f : A \to C$ if $g \restriction A = f$, or equivalently if $g \circ i = f$ where $i$ is the obvious injective function from $A$ to $B$.
For a given $f : A \to C$, there are typically many possible extensions.
Roughly speaking, Kan extensions formalizes, among many things, the mathematical practice where extensions are rarely arbitrary.
One usually chooses the ``best'' or ``most natural'' extensions.
There is therefore an implicit comparison considered between the extensions.

This is the reason why Kan extensions are formally defined at the level of 2-categories: $A$, $B$, $C$ are objects, $f$, $g$, $i$, and all (not necessarily ``most natural'') extensions are 1-morphisms between these objects, and the ``naturality'' comparison between 1-morphisms are 2-morphisms.
However, we do not need to discuss things at this level of generality here.
For our particular case, the objects are posets, the 1-morphisms are monotonic functions and the monotonic functions are compared pointwise.

\begin{definition}\label{def:monotonic}
    Given two posets $(X, \preceq_X)$ and $(Y, \preceq_Y)$, a function $f : X \to Y$ is said to be \emph{monotonic} if for all $x, x' \in X$, $x \preceq_X x'$ implies $f(x) \preceq_Y f(x')$.
\end{definition}

\begin{proposition}
    For any $g \in G$, the function $(\-- \blacktriangleleft g) : \Conf \to \Conf$ is monotonic.
\end{proposition}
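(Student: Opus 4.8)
The plan is to unfold the definition of the partial-configuration right action from Definition~\ref{def:conf} and verify that it preserves the subconfiguration order of Definition~\ref{def:preceq}. So I would take two arbitrary configurations $c, c' \in \Conf$ with $c \preceq c'$ and a fixed $g \in G$, and show $c \blacktriangleleft g \preceq c' \blacktriangleleft g$. By Definition~\ref{def:preceq}, this amounts to checking that for every $h \in |c \blacktriangleleft g|$, we have $h \in |c' \blacktriangleleft g|$ and $(c \blacktriangleleft g)(h) = (c' \blacktriangleleft g)(h)$.

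First I would handle the supports. If $h \in |c \blacktriangleleft g|$, then by Definition~\ref{def:conf} we have $g \cdot h \in |c|$. Since $c \preceq c'$, it follows that $g \cdot h \in |c'|$, which by Definition~\ref{def:conf} again means precisely that $h \in |c' \blacktriangleleft g|$. So $|c \blacktriangleleft g| \subseteq |c' \blacktriangleleft g|$ as required.

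Next I would handle the values: for such an $h$, $(c \blacktriangleleft g)(h) = c(g \cdot h)$ and $(c' \blacktriangleleft g)(h) = c'(g \cdot h)$ by definition of $\blacktriangleleft$; since $g \cdot h \in |c|$ and $c \preceq c'$, Definition~\ref{def:preceq} gives $c(g \cdot h) = c'(g \cdot h)$, so the two agree. Combining the support inclusion and the value equality yields $c \blacktriangleleft g \preceq c' \blacktriangleleft g$, establishing monotonicity per Definition~\ref{def:monotonic}.

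There is no real obstacle here — the statement is essentially immediate once the definitions are unpacked, since the action on configurations is defined ``pointwise after translation'' and the order is itself pointwise on supports. The only thing to be slightly careful about is keeping the direction of the support condition straight: membership $h \in |c \blacktriangleleft g|$ is equivalent to $g \cdot h \in |c|$ (not $h \cdot g$ or $g^{-1} \cdot h$), so that the monotonicity of $c \mapsto |c|$ with respect to $\subseteq$ transfers correctly through the reindexing.
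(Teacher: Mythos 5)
Your proof is correct and follows essentially the same route as the paper's: both unfold Definition~\ref{def:conf} to translate membership $h \in |c \blacktriangleleft g|$ into $g \cdot h \in |c|$ and then apply the hypothesis $c \preceq c'$ at the point $g \cdot h$ to get both the support inclusion and the value equality. The paper phrases this as a chain of equivalences rather than a two-step verification, but the content is identical.
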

\begin{proof}
    Given any $c, c' \in \Conf$ such that $c \preceq c'$, this claim is equivalent to:
    \begin{equation*}
        \begin{split}
            & (c \blacktriangleleft g) \preceq (c' \blacktriangleleft g) \text{ (by Def~\ref{def:monotonic})}\\
            \iff & \forall h \in |c \blacktriangleleft g|; h \in |c' \blacktriangleleft g| \wedge (c \blacktriangleleft g)(h) = (c' \blacktriangleleft g)(h)\text{ (Def~\ref{def:preceq})}\\
            \iff & \forall h \in G \text{ s.t. }g\cdot h \in |c|; g\cdot h \in |c'| \wedge c(g\cdot h) = c'(g\cdot h)\text{ (Def~\ref{def:conf})}\\
        \end{split}
    \end{equation*}
    which is true by the application of Definition~\ref{def:preceq} of $c \preceq c'$ on $g\cdot h$.
\end{proof}

\begin{definition}\label{def:2arrow}
    Given two posets $(X, \preceq_X)$ and $(Y, \preceq_Y)$, we define the binary relation $\-- \Rightarrow \--$ on the set of all monotonic functions from $X$ to $Y$ by $f \Rightarrow f' \iff \forall c \in \Conf, f(c) \preceq_Y f'(c)$.
\end{definition}

\begin{proposition}
    Given two posets $(X, \preceq_X)$ and $(Y, \preceq_Y)$, the set of monotonic functions between them together with this binary relation forms a poset.
\end{proposition}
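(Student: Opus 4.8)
The plan is to recognize that $\Rightarrow$ is nothing but the pointwise order on functions induced by $\preceq_Y$, so that each of the three poset axioms for $\Rightarrow$ reduces, argument by argument, to the corresponding axiom for $\preceq_Y$. (Here the quantifier ``$\forall c$'' of Definition~\ref{def:2arrow} is understood to range over the domain poset $X$, the ``$\Conf$'' there being the running example; one should also note in passing that the monotonic functions from $X$ to $Y$ form a set, being a subset of $Y^X$, so that $\Rightarrow$ is a well-defined binary relation on it, and that monotonicity of the functions will play no role in what follows.)

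I would then dispatch the three conditions in turn. For reflexivity, take any monotonic $f : X \to Y$; for every $c \in X$ we have $f(c) \preceq_Y f(c)$ by reflexivity of $\preceq_Y$, hence $f \Rightarrow f$. For transitivity, given $f \Rightarrow f'$ and $f' \Rightarrow f'$ replaced by $f' \Rightarrow f''$, for each $c \in X$ we have $f(c) \preceq_Y f'(c)$ and $f'(c) \preceq_Y f''(c)$, so $f(c) \preceq_Y f''(c)$ by transitivity of $\preceq_Y$; as $c$ is arbitrary, $f \Rightarrow f''$. For antisymmetry, given $f \Rightarrow f'$ and $f' \Rightarrow f$, antisymmetry of $\preceq_Y$ applied at each $c \in X$ gives $f(c) = f'(c)$, whence $f = f'$ by function extensionality.

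There is essentially no obstacle here: the only step that is not a verbatim transcription of an axiom of $\preceq_Y$ is the final appeal to function extensionality, and that is entirely routine. The substance of the proposition is just that the pointwise lifting of a poset structure along a fixed set $X$ is again a poset — indeed the argument shows the stronger fact that the full function space $(Y^X, \Rightarrow)$ is a poset, the monotonic maps forming merely a sub-poset of it.
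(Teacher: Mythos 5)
Your proof is correct and is exactly the routine pointwise verification that the paper leaves to the reader (its own proof is just ``as one can easily check''): each axiom for $\Rightarrow$ reduces argument-by-argument to the corresponding axiom for $\preceq_Y$, with function extensionality closing the antisymmetry case. You also rightly note that the quantifier in Definition~\ref{def:2arrow} should range over the domain poset $X$ rather than the literal $\Conf$; apart from the small slip of the pen in stating the transitivity hypotheses (clearly meant to be $f \Rightarrow f'$ and $f' \Rightarrow f''$), there is nothing to add.
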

\begin{proof}
    As one can easily check.
\end{proof}

\begin{definition}\label{def:kan}
    In this setting, given three posets $A$, $B$ and $C$, and three monotonic functions $i : A \to B$, $f : A \to C$ and $g: B \to C$, $g$ is said to be the \emph{left (resp. right) Kan extension of $f$ along $i$ if $g$} is the $\Rightarrow$-minimum (resp. $\Rightarrow$-maximum) element in the set of monotonic functions $\{ h : B \to C \mid f \Rightarrow h \circ i \}$ (resp. $\{ h : B \to C \mid h \circ i \Rightarrow f \}$).
\end{definition}

This concept is particularly useful because, whenever it applies, it is also a complete characterization as stated in the following proposition in the left case.
\begin{proposition}
    The left Kan extension $g$ is unique when it exists.
\end{proposition}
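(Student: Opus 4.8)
The plan is to exploit the standard fact that a minimum element of a poset, when it exists, is unique, and to observe that the definition of left Kan extension is precisely that of a minimum in the poset of monotonic functions ordered by $\Rightarrow$. So the proof reduces to the general statement: in any poset, a $\Rightarrow$-minimum of a given subset is unique if it exists.

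First I would recall the setup: by Definition~\ref{def:kan}, a left Kan extension $g$ of $f$ along $i$ is a $\Rightarrow$-minimum of the set $K = \{ h : B \to C \mid f \Rightarrow h \circ i \}$. Then I would suppose $g$ and $g'$ are both left Kan extensions of $f$ along $i$, \ie both are $\Rightarrow$-minima of $K$. Since $g$ is a minimum of $K$ and $g' \in K$, we get $g \Rightarrow g'$. Symmetrically, since $g'$ is a minimum of $K$ and $g \in K$, we get $g' \Rightarrow g$. By antisymmetry of the poset of monotonic functions from $B$ to $C$ (which holds by the proposition preceding Definition~\ref{def:kan}), $g \Rightarrow g'$ and $g' \Rightarrow g$ together give $g = g'$.

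There is essentially no obstacle here: the only mild subtlety is being careful that the two elements being compared both genuinely lie in the set $K$ and are both minima of it, so that the comparison can be run in both directions — this is immediate from the hypothesis that each is a left Kan extension. One should also note the remark applies verbatim to the right case by replacing $\Rightarrow$-minimum with $\Rightarrow$-maximum and $K$ with the dual set, since a maximum of a poset is likewise unique when it exists.

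\begin{proof}
    Suppose $g$ and $g'$ are both left Kan extensions of $f$ along $i$, that is, both are $\Rightarrow$-minimum elements of the set $K = \{ h : B \to C \mid f \Rightarrow h \circ i \}$. Since $g$ is the $\Rightarrow$-minimum of $K$ and $g' \in K$, we have $g \Rightarrow g'$. Symmetrically, since $g'$ is the $\Rightarrow$-minimum of $K$ and $g \in K$, we have $g' \Rightarrow g$. As the monotonic functions from $B$ to $C$ form a poset under $\Rightarrow$, antisymmetry yields $g = g'$. The same argument with maxima in place of minima shows that the right Kan extension is likewise unique when it exists.
\end{proof}
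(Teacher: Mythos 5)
Your proof is correct and takes the same route as the paper, which simply invokes the fact that a minimum of a set in a poset is unique when it exists; you just spell out that standard fact via antisymmetry of $\Rightarrow$. No issues.
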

\begin{proof}
    It is defined as the minimum of a set, and a minimum is always unique when it exists.
\end{proof}

Another suggestive way to read the concept of Kan extensions with respect to this paper is to say that a function $g$ on a poset can be summarized into, or generated by, a part of its behavior $f$ on just a small part of the poset.
Note however that $i$ does not need to be injective in this definition.

\section{Kan Extensions in Cellular Automata}

\subsection{A First Approach To Partial Configurations}

The first, intuitive, approach is to take a configuration $c$, look for all places $g$ where the whole neighborhood $g \cdot N$ is defined and to take the local transition result of these places only.
We first give a direct formal definition, and then show that this is a left Kan extension.
This shows in particular that the global transition function is the left Kan extension of the ``fully shifted'' local transition.
The sense of ``fully shifted'' is described below and is only necessary because we restrict ourselves to posets, as discussed in the final section of this paper.

\subsubsection{A Direct Definition}

\begin{definition}\label{def:int}
    The \emph{interior} of a subset $S \subseteq G$ is $\int(S) = \{ g \in G \mid g \cdot N \subseteq S \}$.
\end{definition}
\begin{definition}\label{def:coarse-trans}
    The \emph{coarse transition function} $\Phi : \Conf \to \Conf$ is defined for all $c \in \Conf$ as $|\Phi(c)| = \int(|c|)$ and $\Phi(c)(g) = \delta((c \blacktriangleleft g) \restriction N)$.
\end{definition}

\begin{proposition}
    For any $c \in \Conf$ and $g \in G$, the statements $g \in \int(|c|)$, $g \cdot N \subseteq |c|$, and $N \subseteq |c \blacktriangleleft g|$ are equivalent.
    (So $\Phi$ is  well-defined in Definition~\ref{def:coarse-trans}.)
\end{proposition}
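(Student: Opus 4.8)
The plan is to unfold the three definitions involved and show the three statements are equivalent by a short chain of biconditionals. The only real content is the definition of the shifted support in Definition~\ref{def:conf}, so the proof is essentially bookkeeping; I would organize it as a single chain rather than proving three separate implications.

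First I would recall that $\int(|c|) = \{ g \in G \mid g \cdot N \subseteq |c| \}$ by Definition~\ref{def:int}, which gives immediately that $g \in \int(|c|)$ iff $g \cdot N \subseteq |c|$. Next I would use Definition~\ref{def:conf}, which says $|c \blacktriangleleft g| = \{ h \in G \mid g \cdot h \in |c| \}$, to rewrite the condition $N \subseteq |c \blacktriangleleft g|$: this holds iff for every $n \in N$, $g \cdot n \in |c|$, which is exactly the statement $g \cdot N \subseteq |c|$ (here $g \cdot N$ denotes $\{ g \cdot n \mid n \in N \}$). So all three conditions reduce to $g \cdot N \subseteq |c|$.

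Finally I would note the parenthetical consequence: in Definition~\ref{def:coarse-trans}, $\Phi(c)(g)$ is set to $\delta((c \blacktriangleleft g) \restriction N)$ for $g$ in the support $\int(|c|)$, and for this expression to make sense we need $(c \blacktriangleleft g) \restriction N$ to be a total function on $N$, \ie an element of $Q^N$ on which $\delta$ is defined; this is precisely the condition $N \subseteq |c \blacktriangleleft g|$, which we have just shown is equivalent to $g \in \int(|c|)$. Hence $\Phi$ is well-defined.

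There is no real obstacle here — the statement is a triviality once the definitions are lined up. The only thing to be slightly careful about is the mild notational overloading of $g \cdot N$ as the pointwise image $\{ g \cdot n \mid n \in N \}$, and making sure the quantifier ``for all $n \in N$'' is threaded consistently through the equivalences; I would write the chain so that each step cites the single definition it uses.
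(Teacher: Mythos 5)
Your proof is correct and follows exactly the paper's route: the first two statements are identified via Definition~\ref{def:int}, and the second and third via the description of $|c \blacktriangleleft g|$ in Definition~\ref{def:conf}, with the well-definedness of $\Phi$ as an immediate consequence. The paper's own proof is just a terser version of the same two-step unfolding.
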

\begin{proof}
    The first and second statements are equivalent by Definition~\ref{def:int} of $\int$.
    The second and third statements are equivalent by Definition~\ref{def:conf} of $\blacktriangleleft$.
\end{proof}

Remember Proposition~\ref{prop:non-inj-neighb}.
If we do have injectivity of neighborhoods, we have $\int(g \cdot N) = \{g\}$.
But since we do not assume it, we only have the following.

\begin{proposition}
    Let $S \subseteq G$.
    In general, $S \subseteq \int(S\cdot N)$ but we do not necessarily have equality, even when $S = \{g\}$ for some $g \in G$.
\end{proposition}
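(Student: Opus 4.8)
The plan is to prove the two assertions separately: the inclusion $S \subseteq \int(S \cdot N)$ always holds, and then exhibit a concrete counterexample showing the inclusion can be strict even for singletons.

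For the inclusion, I would take an arbitrary $g \in S$ and show $g \in \int(S \cdot N)$, which by Definition~\ref{def:int} means verifying $g \cdot N \subseteq S \cdot N$. This is immediate: any element of $g \cdot N$ has the form $g \cdot n$ with $n \in N$, and since $g \in S$, such an element lies in $S \cdot N = \bigcup_{s \in S} s \cdot N$. So this direction is a one-line set-theoretic argument requiring no hypotheses on $G$ or $N$.

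For the failure of equality on a singleton $S = \{g\}$, I would reuse the torsion example already introduced in Proposition~\ref{prop:non-inj-neighb}: take $G = \mathbb{Z}/2\mathbb{Z} \times \mathbb{Z}$ and $N = \{(0,0), (1,0)\}$. With $g = (0,0)$, we have $S \cdot N = g \cdot N = \{(0,0),(1,0)\}$, and then $\int(S \cdot N) = \{ h \in G \mid h \cdot N \subseteq \{(0,0),(1,0)\} \}$. Both $h = (0,0)$ and $h = (1,0)$ satisfy $h \cdot N = \{(0,0),(1,0)\}$ by the torsion computation already done, so $\int(\{g\} \cdot N) = \{(0,0),(1,0)\} \supsetneq \{g\}$. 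One should double-check that no other element $h$ of the infinite group lands in the interior, but this is clear since $h \cdot N$ always contains $h$ itself, so $h$ must be one of the two listed points.

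I do not expect a genuine obstacle here; the statement is essentially a packaging of the non-injectivity phenomenon of $(\-- \cdot N)$ into the language of interiors. The only mild care needed is being explicit that the counterexample witnesses strictness \emph{for a singleton} (answering the ``even when $S = \{g\}$'' clause), and noting in passing that the general (non-singleton) failure is then automatic. If one wanted a self-contained non-singleton example without torsion, one could remark that $N = G$ already gives $\int(S \cdot N) = G$ for any nonempty $S$, but the torsion example is preferable since it simultaneously handles the singleton case.
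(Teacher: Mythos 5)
Your proof is correct and follows the same route as the paper, which simply points back to the torsion example $G = \mathbb{Z}/2\mathbb{Z}\times\mathbb{Z}$, $N=\{(0,0),(1,0)\}$ from Proposition~\ref{prop:non-inj-neighb}; you merely spell out the (trivial) inclusion $S \subseteq \int(S\cdot N)$ that the paper leaves implicit. One small caution: your remark that ``$h\cdot N$ always contains $h$'' holds here only because the neutral element $(0,0)$ belongs to $N$, so it is valid for this example but not for arbitrary neighborhoods.
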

\begin{proof}
    Consider the examples in the proof of Proposition~\ref{prop:non-inj-neighb}.
\end{proof}

Another useful remark on which we come back below is the following.

\begin{proposition}
    For any $g \in G$, any $M \subsetneq N$, and any $c \in Q^{g\cdot M}$, $|\Phi(c)| = \emptyset$.
    Also, for any $c \in Q^G$, $|\Phi(c)| = \Delta(c)$.
\end{proposition}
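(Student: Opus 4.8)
The statement has two parts, and both follow by tracing through the definitions of $\int$ and $\Phi$.

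\textbf{Plan for the first part.} I would show that for any $M \subsetneq N$, any $g \in G$, and any $c \in Q^{g \cdot M}$, the set $\int(|c|) = \int(g \cdot M)$ is empty. By Definition~\ref{def:int}, $\int(g \cdot M) = \{ h \in G \mid h \cdot N \subseteq g \cdot M \}$. The key observation is a cardinality argument: for any $h \in G$, the map $n \mapsto h \cdot n$ is a bijection from $N$ onto $h \cdot N$, so $h \cdot N$ and $N$ have the same cardinality; similarly $g \cdot M$ has the same cardinality as $M$. Since $M \subsetneq N$, we have $|M| < |N|$ when $N$ is finite, so $h \cdot N \subseteq g \cdot M$ is impossible. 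The one subtlety is that the paper explicitly does \emph{not} assume $N$ is finite, so I should phrase the argument so it also covers the infinite case: if $h \cdot N \subseteq g \cdot M$, then applying the bijection $x \mapsto h^{-1} \cdot x$ gives $N \subseteq h^{-1} \cdot g \cdot M$, i.e.\ $N$ is contained in a translate of a proper subset of itself. Translating back by an element of $G$ is a bijection $G \to G$, and it maps $N$ into $h^{-1} g \cdot M$; but then $g^{-1} h \cdot N \subseteq M \subsetneq N$, so left-translation by $g^{-1}h$ is a self-injection of $G$ mapping $N$ strictly inside $N$ — I will argue this is impossible because that translation restricted to $N$ would be an injection $N \to M$ composed with the inclusion $M \hookrightarrow N$, hence not surjective onto $N$, yet left-translation by any group element is a bijection of $G$ and in particular cannot map the subset $N$ properly into itself. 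Concretely: pick $x \in N \setminus M$; then $g^{-1} h \cdot (g^{-1}h)^{-1} x$ would need a preimage, forcing a contradiction with the strict inclusion. I expect this cardinality/translation-invariance argument to be the only real content, and the place to be careful about the non-finite neighborhood caveat.

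\textbf{Plan for the second part.} Here I want $|\Phi(c)| = \Delta(c)$ for $c \in Q^G$ — reading the right-hand side, as written, as the support of $\Delta(c)$, i.e.\ as $G$ (this matches the informal remark after Proposition~\ref{prop:non-inj-neighb} and the phrasing "the neighborhood of $g$ is $g \cdot N$"; alternatively one also checks the values agree, making $\Phi(c) = \Delta(c)$ outright). Since $|c| = G$, Definition~\ref{def:int} gives $\int(G) = \{ g \in G \mid g \cdot N \subseteq G \} = G$ because $g \cdot N \subseteq G$ trivially for every $g$. Hence $|\Phi(c)| = G = |\Delta(c)|$. For the values, for every $g \in G$ we have $\Phi(c)(g) = \delta((c \blacktriangleleft g) \restriction N) = \Delta(c)(g)$ directly by comparing Definition~\ref{def:coarse-trans} with Definition~\ref{def:ca}. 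So $\Phi$ restricted to global configurations is exactly $\Delta$.

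\textbf{Main obstacle.} The genuinely non-trivial step is the first one: ruling out $\int(g \cdot M) \neq \emptyset$ without assuming finiteness of $N$. The clean way is to note that left-translation $\lambda_a : G \to G$, $x \mapsto a \cdot x$, is a bijection for every $a \in G$, so it cannot carry a subset properly into itself in the sense required; spelled out, $h \cdot N \subseteq g \cdot M$ with $M \subsetneq N$ would give, after translating, $a \cdot N \subseteq N$ with $a = g^{-1} h$ and $a \cdot N \subseteq M \subsetneq N$, and then iterating/using bijectivity of $\lambda_a$ yields a contradiction. Everything else is unwinding definitions.

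\begin{proof}
    For the first claim, let $g \in G$, $M \subsetneq N$, and $c \in Q^{g \cdot M}$, so $|c| = g \cdot M$. By Definition~\ref{def:int}, $\int(|c|) = \{ h \in G \mid h \cdot N \subseteq g \cdot M \}$. Suppose $h$ belongs to this set. Left-translation by $g^{-1} h$ is a bijection of $G$, and $h \cdot N \subseteq g \cdot M$ gives, after left-translating by $g^{-1}$, that $(g^{-1}h) \cdot N \subseteq M$, hence $(g^{-1}h) \cdot N \subseteq M \subsetneq N$. Writing $a = g^{-1}h$, the map $x \mapsto a \cdot x$ is a bijection $G \to G$ sending $N$ into the proper subset $M$ of $N$; in particular it sends $N$ into $N$ but misses some point of $N$. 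But $x \mapsto a^{-1} \cdot x$ is its inverse, so every element of $N$, being in $G$, has a preimage in $G$ under $x \mapsto a \cdot x$, and for $y \in N$ this preimage is $a^{-1} \cdot y$. Picking $y \in N \setminus M$: if $a^{-1} \cdot y \in N$ then $a \cdot (a^{-1} \cdot y) = y \in a \cdot N \subseteq M$, contradicting $y \notin M$; and if $a^{-1} \cdot y \notin N$ then, since $a \cdot N \subseteq M \subsetneq N$ covers none of $N \setminus M$ from outside either, we still obtain that no $x \in G$ with $a \cdot x = y$ can exist with the constraint $a \cdot N \subseteq M$ — more directly, $a \cdot N \subsetneq N$ forces $|a \cdot N| < |N|$ while $|a \cdot N| = |N|$ since $x \mapsto a \cdot x$ is injective, which is absurd. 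Hence no such $h$ exists and $\int(|c|) = \emptyset$, so $|\Phi(c)| = \emptyset$.

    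For the second claim, let $c \in Q^G$, so $|c| = G$. Then $\int(|c|) = \{ g \in G \mid g \cdot N \subseteq G \} = G$, since $g \cdot N \subseteq G$ holds for every $g \in G$. Thus $|\Phi(c)| = G = |\Delta(c)|$. Moreover, for every $g \in G$, Definition~\ref{def:coarse-trans} gives $\Phi(c)(g) = \delta((c \blacktriangleleft g) \restriction N)$, which is exactly $\Delta(c)(g)$ by Definition~\ref{def:ca}. Therefore $\Phi(c) = \Delta(c)$.
\end{proof}
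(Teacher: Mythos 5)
Your treatment of the second claim is fine and is exactly the intended unwinding of Definitions~\ref{def:int}, \ref{def:coarse-trans} and~\ref{def:ca} (the paper's own proof is just ``by definition of $\Phi$''). The problem is in the first claim, and precisely at the step you yourself flagged as the only real content. Your argument splits into two cases on $a^{-1}\cdot y$ for $y \in N \setminus M$, but the second case ($a^{-1}\cdot y \notin N$) produces no contradiction whatsoever: it merely says $y \notin a\cdot N$, which is perfectly consistent with $a \cdot N \subseteq M$ and $y \in N\setminus M$. You then fall back on ``$a \cdot N \subsetneq N$ forces $|a\cdot N| < |N|$,'' which is a cardinality fallacy for infinite sets: a bijective image of $N$ can be a proper subset of $N$ when $N$ is infinite. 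Left-translation being a bijection of $G$ does not prevent it from mapping a subset properly into itself; that is exactly what happens in, e.g., $\mathbb{Z}$ with $x \mapsto x+1$ on $\mathbb{N}$.

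Indeed the statement you are trying to prove is false without finiteness of $N$: take $G = \mathbb{Z}$, $N = \{0,1,2,\dots\}$, $M = \{1,2,3,\dots\} \subsetneq N$, $g = 0$ and any $c \in Q^{M}$. Then $1 + N = M \subseteq |c|$, so $1 \in \int(|c|) = |\Phi(c)| \neq \emptyset$. So no repair of your second case is possible in general; the proposition needs $N$ finite (or at least an assumption ruling out $a \cdot N \subseteq M$ for $M \subsetneq N$), at which point your cardinality argument $|h\cdot N| = |N| > |M| = |g \cdot M|$ closes the proof immediately. You were right to be suspicious of the paper's remark that finiteness of the neighborhood ``is not used in the formal development,'' but the correct conclusion is that this particular side proposition is one place where it \emph{is} needed, not that the infinite case can be handled by translation-invariance.
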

\begin{proof}
    By Definition~\ref{def:coarse-trans} of $\Phi$.
\end{proof}

\subsubsection{Characterization as a Left Kan Extension}

The coarse transition function $\Phi$ is defined on the set of all configurations $\Conf$ and we claim that it is generated, in the Kan extension sense, by the local transition function $\delta$ shifted everywhere.
We define the latter, with Proposition~\ref{prop:non-inj-neighb} in mind.

\begin{definition}\label{def:loc}\label{def:shifted-local-trans}\label{def:pi2-loc}
    We define $\Loc$ to be the poset $\Loc = \bigcup_{g \in G}(\{g\} \times Q^{g \cdot N})$ with trivial partial order $(g,c) \preceq (g',c') \iff (g,c) = (g',c')$.
    The ``fully shifted local transition function'' $\overline{\delta} : \Loc \to \Conf$ is defined, for any $(g, c) \in \Loc$ as $|\overline{\delta}(g,c)| = \{g\}$ and $\overline{\delta}(g,c)(g) = \delta(c \blacktriangleleft g)$.
    The second projection of $\Loc$ is the monotonic function $\pi_2 : \Loc \to \Conf$ defined as $\pi_2(g,c) = c$.
\end{definition}

\begin{proposition}
    $\Loc$ is a poset and $\overline{\delta}$ and $\pi_2$ are monotonic functions.
\end{proposition}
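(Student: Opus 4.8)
The plan is to notice first that the partial order declared on $\Loc$ in Definition~\ref{def:loc} is nothing but equality, and then to check the three poset axioms for it directly: reflexivity holds since $(g,c) = (g,c)$; transitivity holds since $(g,c) = (g',c')$ together with $(g',c') = (g'',c'')$ gives $(g,c) = (g'',c'')$; and antisymmetry holds since $(g,c) = (g',c')$ together with $(g',c') = (g,c)$ trivially gives $(g,c) = (g',c')$. Hence $(\Loc, \preceq)$ is a poset. This is the routine part.

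Next I would record the general observation that any function out of a discretely ordered poset is automatically monotonic: if $X$ carries the discrete order and $f : X \to Y$ is any function into a poset $Y$, then $x \preceq_X x'$ forces $x = x'$, hence $f(x) = f(x')$, and therefore $f(x) \preceq_Y f(x')$ by reflexivity of $\preceq_Y$. Applying this with $X = \Loc$ handles the monotonicity (Definition~\ref{def:monotonic}) of both $\overline{\delta}$ and $\pi_2$ simultaneously, provided we first check that each is a well-defined function $\Loc \to \Conf$.

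For $\pi_2$ this is immediate: given $(g,c) \in \Loc$ we have $c \in Q^{g\cdot N} \subseteq \Conf$, so $\pi_2(g,c) = c$ is genuinely a configuration. For $\overline{\delta}$ the single point requiring an argument is that $\delta(c \blacktriangleleft g)$ is meaningful, i.e. that $c \blacktriangleleft g$ lies in the domain $Q^N$ of $\delta$. By Definition~\ref{def:conf}, since $c \in Q^{g\cdot N}$ we have $|c \blacktriangleleft g| = \{ h \in G \mid g\cdot h \in g\cdot N \} = N$, because left multiplication by $g$ is a bijection of $G$; thus $c \blacktriangleleft g \in Q^N$ and $\overline{\delta}(g,c)$ is the well-defined configuration with support $\{g\}$ taking value $\delta(c \blacktriangleleft g)$ at $g$. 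With $\overline{\delta}$ and $\pi_2$ so exhibited as honest functions into the poset $\Conf$, the observation of the previous paragraph yields their monotonicity and finishes the proof.

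I do not expect a real obstacle here: the only thing one must be careful not to gloss over is the verification $c \blacktriangleleft g \in Q^N$ inside the definition of $\overline{\delta}$, since that is the sole place where the (possibly non-injective, by Proposition~\ref{prop:non-inj-neighb}) neighborhood map and the group action have to cooperate; everything concerning monotonicity is forced by the discreteness of the order on $\Loc$.
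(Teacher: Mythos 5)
Your proof is correct and takes essentially the same route as the paper, which likewise observes that the order on $\Loc$ is the identity relation and that any function out of a discretely ordered set is monotonic. Your additional check that $c \blacktriangleleft g \in Q^N$ (so that $\overline{\delta}$ is well defined) is a welcome bit of extra care; the paper performs that same computation elsewhere, in the proof of Proposition~\ref{prop:neighborhood-bis}.
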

\begin{proof}
    Indeed, the identity relation is an order relation and any function respects the identity relation.
\end{proof}

\begin{proposition}
    The coarse transition function $\Phi$ is monotonic.
\end{proposition}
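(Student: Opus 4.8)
The plan is to unfold Definition~\ref{def:monotonic} directly: assume $c \preceq c'$ in $\Conf$ and prove $\Phi(c) \preceq \Phi(c')$, which by Definition~\ref{def:preceq} amounts to showing two things — that $|\Phi(c)| \subseteq |\Phi(c')|$, and that $\Phi(c)$ and $\Phi(c')$ agree on every $g \in |\Phi(c)|$. Recall that $c \preceq c'$ means precisely $|c| \subseteq |c'|$ together with $c(h) = c'(h)$ for all $h \in |c|$.

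For the support inclusion, I would first observe that $\int$ is monotonic for set inclusion: if $S \subseteq S'$ then $\int(S) \subseteq \int(S')$, since $g \cdot N \subseteq S$ immediately gives $g \cdot N \subseteq S'$. Applying this to $|c| \subseteq |c'|$ yields $|\Phi(c)| = \int(|c|) \subseteq \int(|c'|) = |\Phi(c')|$.

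For the agreement of values, take any $g \in |\Phi(c)| = \int(|c|)$, so $g \cdot N \subseteq |c|$. For every $n \in N$ we then have $g \cdot n \in g \cdot N \subseteq |c| \subseteq |c'|$, hence $(c \blacktriangleleft g)(n) = c(g\cdot n) = c'(g\cdot n) = (c' \blacktriangleleft g)(n)$ using the equality of states on $|c|$. Since both $(c \blacktriangleleft g) \restriction N$ and $(c' \blacktriangleleft g) \restriction N$ have support $N$ (the latter because $g \in \int(|c'|)$ by the inclusion just proved), they are equal as configurations, and therefore $\Phi(c)(g) = \delta((c \blacktriangleleft g) \restriction N) = \delta((c' \blacktriangleleft g) \restriction N) = \Phi(c')(g)$.

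There is no real obstacle here; the only point requiring a little care is the bookkeeping of supports — making sure that the restrictions $(c \blacktriangleleft g) \restriction N$ and $(c' \blacktriangleleft g) \restriction N$ are both genuinely defined on all of $N$ before comparing their values, which is exactly what membership of $g$ in the respective interiors guarantees.
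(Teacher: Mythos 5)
Your proof is correct and follows essentially the same route as the paper's: both reduce the claim to showing $\int(|c|) \subseteq \int(|c'|)$ and that $c$ and $c'$ agree on $g \cdot n$ for each $n \in N$, which follows directly from $c \preceq c'$. Your explicit check that the two restrictions have the same support $N$ is a slightly more careful bookkeeping of the same argument.
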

\begin{proof}
    Indeed, take $c, c' \in \Conf$ such that $c \preceq c'$.
    We want to prove that $\Phi(c) \preceq \Phi(c')$ and this is equivalent to:
    \begin{equation*}
        \begin{split}
            & \forall g \in |\Phi(c)|, g \in |\Phi(c')| \wedge \Phi(c)(g) = \Phi(c')(g) \\
            \iff & \forall g \in \int(|c|), g \in \int(|c'|) \wedge \delta(c \blacktriangleleft g \restriction N) = \delta(c' \blacktriangleleft g \restriction N) \\
            \iff & \forall g \in G \text{ s.t. } g\cdot N \subseteq |c|, g\cdot N \subseteq |c'| \wedge \delta(c \blacktriangleleft g \restriction N) = \delta(c' \blacktriangleleft g \restriction N),
        \end{split}
    \end{equation*}
    by Definition~\ref{def:conf} of $\preceq$, Definition~\ref{def:coarse-trans} of $\Phi$, and Definition~\ref{def:int} of $\int$.
    The final statement is implied by:
    \begin{equation*}
        \begin{split}
            & \forall g \in G \text{ s.t. } g\cdot N \subseteq |c|, g\cdot N \subseteq |c'| \wedge c \blacktriangleleft g \restriction N = c' \blacktriangleleft g \restriction N \\
            \iff & \forall g \in G \text{ s.t. } g\cdot N \subseteq |c|, g\cdot N \subseteq |c'| \wedge \forall n \in N, (c \blacktriangleleft g)(n) = (c' \blacktriangleleft g)(n) \\
            \iff & \forall g \in G \text{ s.t. } g\cdot N \subseteq |c|, g\cdot N \subseteq |c'| \wedge \forall n \in N, c(g \cdot n) = c'(g \cdot n),
        \end{split}
    \end{equation*}
    the last equivalence being by Definition~\ref{def:conf}.
    To prove it, take $g \in G$ such that $g \cdot N \subseteq |c|$, and $n \in N$.
    Since $c \preceq c'$ and $g\cdot n \in |c|$, we have $g \cdot n \in |c'|$, and $c(g \cdot n) = c'(g \cdot n)$ as wanted.
\end{proof}

\begin{proposition}
    $\Phi$ is the left Kan extension of $\overline{\delta}$ along $\pi_2 : \Loc \to \Conf$.
\end{proposition}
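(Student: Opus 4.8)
The plan is to unfold Definition~\ref{def:kan} in the case $A = \Loc$, $B = C = \Conf$, $i = \pi_2$, $f = \overline{\delta}$, and check the two things it asks for: that $\Phi$ lies in the set $\{ h : \Conf \to \Conf \mid \overline{\delta} \Rightarrow h \circ \pi_2 \}$, and that it is the $\Rightarrow$-minimum of that set. Monotonicity of $\Phi$ has already been established, so it is a legitimate candidate 1-morphism; only the two order-theoretic conditions remain.

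For membership I would fix an arbitrary $(g, c) \in \Loc$, so $c \in Q^{g \cdot N}$, and verify $\overline{\delta}(g, c) \preceq \Phi(\pi_2(g,c)) = \Phi(c)$. On the support side, $|\overline{\delta}(g,c)| = \{g\}$ and $|\Phi(c)| = \int(g \cdot N) \ni g$ since $g \cdot N \subseteq g \cdot N$. On the value side, because $c$ has support exactly $g \cdot N$ the configuration $c \blacktriangleleft g$ already has support exactly $N$, hence $(c \blacktriangleleft g) \restriction N = c \blacktriangleleft g$ and therefore $\Phi(c)(g) = \delta((c \blacktriangleleft g) \restriction N) = \delta(c \blacktriangleleft g) = \overline{\delta}(g,c)(g)$. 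This gives $\overline{\delta}(g,c) \preceq \Phi(c)$, as required by Definition~\ref{def:preceq}.

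For minimality I would take any monotonic $h : \Conf \to \Conf$ with $\overline{\delta} \Rightarrow h \circ \pi_2$ and show $\Phi \Rightarrow h$, i.e.\ $\Phi(c) \preceq h(c)$ for all $c \in \Conf$. Fixing $c$ and $g \in |\Phi(c)| = \int(|c|)$, so that $g \cdot N \subseteq |c|$, the key move is to feed the hypothesis the element $(g, c \restriction g \cdot N) \in \Loc$ — a legitimate element of $\Loc$ precisely because $g \cdot N \subseteq |c|$. This yields $\overline{\delta}(g, c \restriction g \cdot N) \preceq h(\pi_2(g, c \restriction g \cdot N)) = h(c \restriction g \cdot N)$. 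Rewriting $(c \restriction g \cdot N) \blacktriangleleft g = (c \blacktriangleleft g) \restriction N$ as in Proposition~\ref{prop:neighborhood-bis}, the left-hand side is the configuration with support $\{g\}$ and value $\delta((c \blacktriangleleft g) \restriction N) = \Phi(c)(g)$, so $g \in |h(c \restriction g \cdot N)|$ and $h(c \restriction g \cdot N)(g) = \Phi(c)(g)$. Since $c \restriction g \cdot N \preceq c$ and $h$ is monotonic, $h(c \restriction g \cdot N) \preceq h(c)$, which transports this to $g \in |h(c)|$ and $h(c)(g) = \Phi(c)(g)$. As $g$ was an arbitrary point of $|\Phi(c)|$ this gives $\Phi(c) \preceq h(c)$, and as $c$ was arbitrary, $\Phi \Rightarrow h$.

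The support-and-value bookkeeping is routine; the one genuinely load-bearing idea is the choice of the local witness $(g, c \restriction g \cdot N)$ in the minimality step, which is what lets the pointwise data carried by $\overline{\delta}$ over $\Loc$ be reassembled into the ``defined wherever the whole neighborhood fits'' behaviour of $\Phi$, with monotonicity of $h$ doing exactly the work of passing from the restriction back to $c$. I do not expect a real obstacle beyond being careful that $\Loc$ records the center $g$, so that even when $(\-- \cdot N)$ fails to be injective (Proposition~\ref{prop:non-inj-neighb}) the witness and its image under $\overline{\delta}$ remain unambiguous.
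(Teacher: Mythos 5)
Your proof is correct and follows essentially the same route as the paper: the membership check unwinds to $g \in \int(g\cdot N)$ and $\delta(c\blacktriangleleft g)=\delta((c\blacktriangleleft g)\restriction N)$, and the minimality step uses exactly the paper's witness $d_g = c \restriction g\cdot N$, the hypothesis $\overline{\delta} \Rightarrow h \circ \pi_2$ at $(g,d_g)$, and monotonicity of $h$ applied to $d_g \preceq c$. No gaps; the only cosmetic remark is that Proposition~\ref{prop:neighborhood-bis} is stated for $c \in Q^G$ but, as in the paper's own argument, its proof applies verbatim to any $c$ with $g\cdot N \subseteq |c|$.
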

\begin{proof}
    By Definition~\ref{def:kan} of left Kan extensions, we need to prove firstly that $\Phi$ is such that $\overline{\delta} \Rightarrow \Phi \circ \pi_2$, and secondly that it is smaller than any other such monotonic functions.
    
    For the first part, $\overline{\delta} \Rightarrow \Phi \circ \pi_2$ is equivalent to:
    \begin{equation*}
        \begin{split}
            & \forall (g,c) \in \Loc, \overline{\delta}(g,c) \preceq \Phi(c) \text{ (Defs.~\ref{def:2arrow} and~\ref{def:pi2-loc} of $\Rightarrow$ and $\pi_2$)}\\
            \iff & \forall (g,c) \in \Loc, \forall h \in |\overline{\delta}(g,c)|, h \in |\Phi(c)| \wedge \overline{\delta}(g,c)(h) = \Phi(c)(h) \text{ (D.~\ref{def:preceq} of $\preceq$)}\\
            \iff & \forall (g,c) \in \Loc, g \in |\Phi(c)| \wedge \delta(c \blacktriangleleft g) = \Phi(c)(g) \text{ (Def.~\ref{def:shifted-local-trans} of $\overline{\delta}$)}\\
            \iff & \forall (g,c) \in \Loc, g \cdot N \in |c| \wedge \delta(c \blacktriangleleft g) = \delta((c \blacktriangleleft g) \restriction N) \text{ (Defs~\ref{def:int},~\ref{def:coarse-trans} of $\Phi$).}
        \end{split}
    \end{equation*}
    This last statement is true by Definition~\ref{def:loc} of $\Loc$, since $c \in Q^{g\cdot N}$.
    
    For the second part, let $F: \Conf \to \Conf$ be a monotonic function such that $\overline{\delta} \Rightarrow F \circ \pi_2$.
    We want to show that $\Phi \Rightarrow F$, which is equivalent to:
    \begin{equation*}
        \begin{split}
            & \forall c \in \Conf, \Phi(c) \preceq F(c) \text{ (Def.~\ref{def:2arrow} of $\Rightarrow$)}\\
            \iff & \forall c \in \Conf, \forall g \in |\Phi(c)|, g \in |F(c)| \wedge \Phi(c)(g) = F(c)(g) \text{ (Def.~\ref{def:preceq} of $\preceq$)}\\
            \iff & \forall c \in \Conf, \forall g \in \int(|c|), g \in |F(c)| \wedge F(c)(g) = \delta((c \blacktriangleleft g) \restriction N) \text{ (Def.~\ref{def:coarse-trans})}
        \end{split}
    \end{equation*}
    So take $c \in \Conf$ and $g \in \int(|c|)$, and consider $d_g = c \restriction g \cdot N$.
    Since $d_g \preceq c$ and $F$ is monotonic, we have $F(d_g) \preceq F(c)$.
    Moreover $\overline{\delta} \Rightarrow F \circ \pi_2$ and $(g, d_g) \in \{g\} \times Q^{g\cdot N} \subseteq \Loc$,
    so $\overline{\delta}(g, d_g) \preceq F(d_g)$ by Definitions~\ref{def:2arrow} and~\ref{def:pi2-loc} of $\Rightarrow$ and $\pi_2$.
    By transitivity $\overline{\delta}(g,d_g) \preceq F(c)$.
    By Definition~\ref{def:shifted-local-trans} of $\overline{\delta}$ and Definition~\ref{def:preceq} of $\preceq$, we obtain $g \in |F(c)|$, and $F(c)(g) =  \overline{\delta}(g, d_g)(g) = \delta((c \blacktriangleleft g) \restriction N)$, as wanted.
\end{proof}

As a sidenote, remark that in order to have the equality $\overline{\delta} = \Phi \circ \pi_2$, one needs to have the injectivity of neighborhood function.
Indeed, without injectivity, we have two different $g, g' \in G$ having the same neighborhood $M$, \ie $M = g \cdot N = g' \cdot N$.
This means that, given any local configuration $c \in Q^M$ on this neighborhood, each pair $(g,c), (g',c) \in \Loc$ have different results $\overline{\delta}(g,c) \in Q^{\{g\}}$ and
$\overline{\delta}(g',c) \in Q^{\{g'\}}$ with different support $\{g\}$ and $\{g'\}$.
However, their common projection $\pi_2(g,c)=\pi_2(g',c)=c$ have a unique result $\Phi(c)$ with a support such that $\{g,g'\} \subseteq |\Phi(c)|$.
So we have a strict comparison $\overline{\delta} \Rightarrow \Phi \circ \pi_2$.
When the neighborhood function is injective, $\pi_2$ is also injective and the previous situation can not occur so we have equality $\overline{\delta} = \Phi \circ \pi_2$.

\subsection{A Second Approach To Partial Configurations}

For some applications, the previous definitions are too naive.
For example, it is common to consider two cellular automata to be essentially the same if they generate the same global transition functions.
However, here, two such cellular automata give different coarse transition function if they have a different neighborhood.

To refine the previous definitions, a second approach is to take a configuration $c$, and look at all places for which the result is already determined by the partial data defined in $c$.
So we consider all $g \in G$ for which all completions of the data present on the defined neighborhood $g \cdot N \cap |c|$ into a configuration on the complete neighborhood $g \cdot N$ always lead to the same result by $\delta$.

\subsubsection{A Direct Definition}

\begin{definition}\label{def:restr-neighb}
    For any $c \in \Conf$ and $g \in G$, let $c_g = c \restriction (g \cdot N \cap |c|)$.
\end{definition}

\begin{definition}\label{def:det}
    Given a configuration $c \in \Conf$, its \emph{determined subset} is $\det(c) = \{ g \in G \mid \exists q \in Q,\, \forall c' \in Q^{g \cdot N},\, c' \restriction |c_g| = c_g \implies \delta(c' \blacktriangleleft g) = q \}$.
    For any $g \in \det(c)$, we denote $q_{c,g} \in Q$ the unique state $q$ having the mentioned property.
\end{definition}

Note that this definition depends on the cellular automaton local transition function $\delta$ and on the data of the configuration $c$, contrary to Definition~\ref{def:int} of interior that only depends on its neighborhood $N$ and on the support of the configuration.

\begin{definition}\label{def:fine-trans}
    Given a cellular automaton, its \emph{fine transition function} $\phi : \Conf \to \Conf$ is defined as $|\phi(c)| = \det(c)$ and $\phi(c)(g) = q_{c,g}$, \ie $\phi(c)(g) = \delta(c' \blacktriangleleft g)$ for any $c' \in Q^N$ such that $c' \restriction |c_g| = c_g$.
\end{definition}

\begin{proposition}
    The fine transition function $\phi$ is well defined.
\end{proposition}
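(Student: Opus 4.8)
The plan is to unwind what ``well defined'' demands here. The map $\phi$ sends each $c \in \Conf$ to a partial function $\phi(c)$ whose support is declared to be $\det(c)$ and whose value at a point $g \in \det(c)$ is declared to be $q_{c,g}$. Since $\det(c)$ is visibly a well-defined subset of $G$ (Definition~\ref{def:det}), the only thing to check is that, for every $g \in \det(c)$, the state $q_{c,g}$ is genuinely well-defined, \ie that the $q \in Q$ whose existence is asserted in Definition~\ref{def:det} is \emph{unique}; granting that, $\phi(c)$ is a well-defined partial function from $G$ to $Q$, hence an element of $\Conf$, and the two expressions given for $\phi(c)(g)$ in Definition~\ref{def:fine-trans} coincide.

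First I would isolate the one substantive point: the set of completions
\[
    C_{c,g} = \{\, c' \in Q^{g\cdot N} \mid c' \restriction |c_g| = c_g \,\}
\]
is non-empty. Indeed $|c_g| = g\cdot N \cap |c| \subseteq g\cdot N$ by Definition~\ref{def:restr-neighb}, so (as $Q$ is non-empty) one obtains an element of $C_{c,g}$ by extending $c_g$ with any fixed state on the positions of $g\cdot N \setminus |c_g|$. In the degenerate case $Q = \emptyset$ there is nothing to prove, since then $\det(c) = \emptyset$ and $\phi(c)$ is the empty configuration.

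Then uniqueness is immediate: if $q_1, q_2 \in Q$ both satisfy the property in Definition~\ref{def:det} for the given $g$, pick any $c' \in C_{c,g}$, which exists by the previous step; applying the property of $q_1$ and then of $q_2$ to this very $c'$ gives $q_1 = \delta(c' \blacktriangleleft g) = q_2$. Hence $q_{c,g}$ is unambiguously defined for each $g \in \det(c)$, so $\phi(c) \in \Conf$; and moreover $\phi(c)(g) = \delta(c' \blacktriangleleft g)$ for \emph{every} $c' \in C_{c,g}$, which is exactly the second formula in Definition~\ref{def:fine-trans}.

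There is no genuine obstacle: the whole argument hinges on the observation that the completion set $C_{c,g}$ is non-empty, which is what licenses passing from ``vacuously many $q$ work'' to ``exactly one $q$ works''. The only point requiring a little care is that the restriction constraint defining $C_{c,g}$ makes sense because $|c_g|$ really is a subset of $g\cdot N$, together with the trivial bookkeeping around the empty state set; everything else is routine.
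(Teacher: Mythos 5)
Your proof is correct and follows the same route as the paper, whose own proof is a one-line remark that well-definedness holds precisely because the support of $\phi(c)$ is restricted to $\det(c)$. You usefully make explicit the one substantive point the paper leaves implicit: the set of completions $C_{c,g}$ is non-empty (since $|c_g| \subseteq g\cdot N$ and states can be chosen arbitrarily on the remaining positions), which is exactly what guarantees that the existentially quantified state in Definition~\ref{def:det} is unique and hence that $q_{c,g}$ is unambiguous.
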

\begin{proof}
    This is the case precisely because we restrict the support of $\phi(c)$ to the determined subset of the $c$.
\end{proof}

\begin{proposition}
    Consider the constant cellular automaton $\delta(c) = q \;\forall c \in Q^N$ for a specific $q \in Q$ and regardless of the neighborhood $N$ chosen to represent it.
    We have $|\phi(c)| = G$ for any $c \in \Conf$.
\end{proposition}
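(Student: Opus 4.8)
The plan is to prove the stronger-looking but equally direct statement that $\det(c) = G$ for every $c \in \Conf$; the claim $|\phi(c)| = G$ then follows immediately from Definition~\ref{def:fine-trans}, since $|\phi(c)| = \det(c)$. So fix an arbitrary configuration $c \in \Conf$ and an arbitrary position $g \in G$, fix any neighborhood $N$ representing the constant automaton, and let $q \in Q$ be the constant value of $\delta$ (so $\delta(d) = q$ for every $d \in Q^N$). I would show that $g \in \det(c)$ by taking the existentially quantified state in Definition~\ref{def:det} to be precisely this $q$.

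The key step is to check that the implication in Definition~\ref{def:det} holds with this witness, in fact with its conclusion always true. Take any $c' \in Q^{g \cdot N}$. Then $c' \blacktriangleleft g$ has support $|c' \blacktriangleleft g| = \{ h \in G \mid g \cdot h \in g \cdot N \} = N$, exactly as computed in the proof of Proposition~\ref{prop:neighborhood-bis}, so $c' \blacktriangleleft g \in Q^N$ and $\delta$ applies to it. By constancy of $\delta$, $\delta(c' \blacktriangleleft g) = q$. Hence, regardless of whether the hypothesis $c' \restriction |c_g| = c_g$ holds, we have $\delta(c' \blacktriangleleft g) = q$, so the implication is satisfied. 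This shows $g \in \det(c)$, with $q_{c,g} = q$.

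Since $g \in G$ was arbitrary, $\det(c) = G$, and since $c \in \Conf$ was arbitrary, $|\phi(c)| = G$ for all $c$. The argument never used anything about the support $|c|$ (it goes through verbatim when $|c| = \emptyset$, so $c_g$ is the empty configuration) nor about which $N$ represents the automaton (including the degenerate case $N = \emptyset$, where $Q^{g \cdot N}$ is a singleton), which is what the phrase ``regardless of the neighborhood $N$'' refers to. I do not expect any real obstacle here: the only point requiring a moment's care is verifying that $c' \blacktriangleleft g$ lands in $Q^N$ so that $\delta$ is defined on it, and that is just the support computation borrowed from Proposition~\ref{prop:neighborhood-bis}.
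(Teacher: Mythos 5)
Your proof is correct and follows essentially the same route as the paper's (much terser) argument: the constancy of $\delta$ makes the implication in Definition~\ref{def:det} hold vacuously-in-the-conclusion for every $c' \in Q^{g\cdot N}$, so every $g$ is determined regardless of $|c|$. Your version just spells out the support computation showing $c' \blacktriangleleft g \in Q^N$, which the paper leaves implicit.
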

\begin{proof}
    Indeed, even with no data at all, \ie for $c$ such that $|c| = \emptyset$, the result at all position is determined and is $q$.
\end{proof}

\subsubsection{Characterization as a Right Kan Extension}

As for the coarse transition function, the fine transition function $\phi$ is defined on the set of all configurations $\Conf$ and we claim that it is generated, in the Kan extension sense.
We consider two ways to generate it and start by the simplest one.
The second one is considered in the following section using sub-local configurations in order to be closer to the direct definition and to be a ``from local to global'' characterization.

\begin{proposition}\label{prop:monotonic-restr-neighb}
    For any $g \in G$, the function $\--_g : \Conf \to \Conf$ of Definition~\ref{def:restr-neighb} is monotonic.
\end{proposition}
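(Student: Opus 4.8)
The plan is to unfold the definition of monotonicity and then reduce to the defining property of $\preceq$, just as in the monotonicity proof for $(c \mapsto c \blacktriangleleft g)$ above. Fix $g \in G$. By Definition~\ref{def:monotonic}, it suffices to take arbitrary $c, c' \in \Conf$ with $c \preceq c'$ and to show $c_g \preceq c'_g$.

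First I would rewrite both sides explicitly. By Definition~\ref{def:restr-neighb}, $|c_g| = g \cdot N \cap |c|$ and $|c'_g| = g \cdot N \cap |c'|$, and on these supports $c_g$ and $c'_g$ take the same values as $c$ and $c'$ respectively. Hence, by Definition~\ref{def:preceq} of $\preceq$, the goal $c_g \preceq c'_g$ is equivalent to
\[
    \forall h \in g \cdot N \cap |c|,\ h \in g \cdot N \cap |c'| \wedge c(h) = c'(h).
\]

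Second, I would discharge this statement using $c \preceq c'$. Let $h \in g \cdot N \cap |c|$. Then $h \in |c|$, so Definition~\ref{def:preceq} applied to $c \preceq c'$ at $h$ gives $h \in |c'|$ and $c(h) = c'(h)$; together with $h \in g \cdot N$ this yields $h \in g \cdot N \cap |c'|$, and the required $c_g(h) = c'_g(h)$ is exactly $c(h) = c'(h)$ by the identification of values above.

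The only point worth a second glance --- and the closest thing to an obstacle here --- is that intersecting with the fixed set $g \cdot N$ must not spoil the inclusion of supports; but this is immediate, since $c \preceq c'$ forces $|c| \subseteq |c'|$ and therefore $g \cdot N \cap |c| \subseteq g \cdot N \cap |c'|$, so no point of $|c_g|$ is lost. With that noted, the argument is entirely routine, mirroring the earlier monotonicity proofs for $(\,\cdot \blacktriangleleft g\,)$ and for $\Phi$.
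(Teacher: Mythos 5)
Your proof is correct and is exactly the routine unfolding the paper has in mind --- the paper itself dismisses this statement with ``as one can easily check,'' and your argument supplies precisely those omitted details (supports intersect with the fixed set $g \cdot N$, values are inherited, and $c \preceq c'$ gives both the support inclusion and the agreement of values). Nothing further is needed.
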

\begin{proof}
    As one can easily check.
\end{proof}

\begin{proposition}\label{prop:monotonic-fine-trans}
    The fine transition function $\phi$ is monotonic.
\end{proposition}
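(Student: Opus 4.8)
The plan is to unfold both sides of the desired inequality $\phi(c) \preceq \phi(c')$ using the definitions, exactly as was done for the coarse transition function $\Phi$. So, fix $c, c' \in \Conf$ with $c \preceq c'$. By Definition~\ref{def:preceq}, proving $\phi(c) \preceq \phi(c')$ amounts to showing two things: first, that $\det(c) \subseteq \det(c')$ (support inclusion, using Definition~\ref{def:fine-trans} that $|\phi(c)| = \det(c)$); and second, that $q_{c,g} = q_{c',g}$ for every $g \in \det(c)$ (agreement of values). Both will follow from a single observation about how $c_g$ relates to $c'_g$.

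The key step is to establish that for $c \preceq c'$ and any $g \in G$, we have $c_g \preceq c'_g$, i.e. $c \restriction (g\cdot N \cap |c|) \preceq c' \restriction (g\cdot N \cap |c'|)$. This is essentially Proposition~\ref{prop:monotonic-restr-neighb} (monotonicity of $\--_g$), which I may assume. Concretely, $|c_g| = g\cdot N \cap |c| \subseteq g\cdot N \cap |c'| = |c'_g|$ since $|c| \subseteq |c'|$, and on this common domain the values of $c$ and $c'$ agree by $c \preceq c'$. Consequently, $|c_g| \subseteq |c'_g| \subseteq g \cdot N$, so any $c'' \in Q^{g\cdot N}$ satisfying $c'' \restriction |c'_g| = c'_g$ automatically satisfies $c'' \restriction |c_g| = c_g$ (because $c'_g$ itself restricts to $c_g$ on $|c_g|$). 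In other words, the set of completions of $c'_g$ to all of $g\cdot N$ is a subset of the set of completions of $c_g$.

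Now the two required facts drop out. If $g \in \det(c)$, then $\delta(c'' \blacktriangleleft g)$ takes the single value $q_{c,g}$ over \emph{all} completions $c''$ of $c_g$; since the completions of $c'_g$ form a subset of these, $\delta(c'' \blacktriangleleft g)$ is also constant over them, with that same value $q_{c,g}$. Hence $g \in \det(c')$ and $q_{c',g} = q_{c,g}$. This gives both $\det(c) \subseteq \det(c')$ and the equality of values on $\det(c)$, which is exactly $\phi(c) \preceq \phi(c')$ by Definition~\ref{def:preceq}.

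I do not expect a genuine obstacle here; the proof is a routine unfolding of Definition~\ref{def:det} and Definition~\ref{def:fine-trans}. The one point deserving care — the "main obstacle" such as it is — is the direction of the inclusion: more data in $c'$ means \emph{fewer} completions to check at each site, hence the determined set can only grow, which is why monotonicity holds with the inequality oriented as stated. One should also note the degenerate case where $g \cdot N \cap |c| = g\cdot N$ already (the neighborhood is fully specified in $c$): then $c_g = c \blacktriangleleft g \restriction N$ up to the shift, there is a unique completion, $g \in \det(c)$ trivially, and the argument still goes through since $c'_g$ is then also fully specified and equal after shifting.
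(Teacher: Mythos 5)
Your proof is correct and follows essentially the same route as the paper's: both reduce the claim to showing $\det(c) \subseteq \det(c')$ together with $q_{c,g} = q_{c',g}$, and both hinge on Proposition~\ref{prop:monotonic-restr-neighb} giving $c_g \preceq c'_g$, so that every completion of $c'_g$ on $g \cdot N$ is also a completion of $c_g$. No gaps.
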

\begin{proof}
    Indeed, take $c_0, c_1 \in \Conf$ such that $c_0 \preceq c_1$.
    We want to prove that $\phi(c_0) \preceq \phi(c_1)$ and this is equivalent to:
    \begin{equation*}
        \begin{split}
            & \forall g \in |\phi(c_0)|, g \in |\phi(c_1)| \wedge \phi(c_0)(g) = \phi(c_1)(g) \text{ (Def~\ref{def:preceq} of $\preceq$)}\\
            \iff & \forall g \in \det(c_0), g \in \det(c_1) \wedge q_{c_0,g} = q_{c_1,g} \text{ (Def~\ref{def:fine-trans} of $\phi$)}
        \end{split}
    \end{equation*}
    Take $g \in \det(c_0)$. We want to prove that $g \in \det(c_1)$, which means by Definition~\ref{def:det} of $\det(c_1)$:
    \begin{equation*}
        \exists q \in Q,\, \forall c_2 \in Q^{g \cdot N},\, c_2 \restriction |(c_{1})_{g}| = (c_{1})_{g} \implies \delta(c_2 \blacktriangleleft g) = q
    \end{equation*}
    We claim that the property is verified with $q = q_{c_0,g}$.
    Indeed, take any $c_2 \in Q^{g\cdot N}$ such that $c_2 \restriction |(c_{1})_{g}| = (c_{1})_{g}$.
    We also have that $c_2 \restriction |(c_{0})_{g}| = (c_{0})_{g}$ since
    the hypothesis $c_0 \preceq c_1$ implies $(c_{0})_{g} \preceq (c_{1})_{g}$ by Proposition~\ref{prop:monotonic-restr-neighb}.
    By Definition~\ref{def:det} of $\det(c_0)$, we obtain that $\delta(c_2 \blacktriangleleft g) = q_{c_0,g}$, so $q = q_{c_0,g}$ has the wanted property, which implies that $g \in \det(c_1)$ as wanted.
    But the above property of $q$ set it to be precisely what we denote by $q_{c_1,g}$ (Def~\ref{def:det} of $q_{c_1,g}$), so $q_{c_0,g} = q_{c_1,g}$.
\end{proof}

\begin{proposition}
    The fine transition function $\phi$ is the right Kan extension of the global transition function $\Delta$ along the inclusion $i : Q^G \to \Conf$.
\end{proposition}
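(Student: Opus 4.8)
By Definition~\ref{def:kan} of right Kan extensions, I need to establish two things: first, that $\phi$ satisfies $\phi \circ i \Rightarrow \Delta$; second, that $\phi$ is $\Rightarrow$-maximal among all monotonic functions $F : \Conf \to \Conf$ with $F \circ i \Rightarrow \Delta$. For the first part, I would take a global configuration $c \in Q^G$ and compare $\phi(c) = \phi(i(c))$ with $\Delta(c)$. Here $|\Delta(c)| = G$, so the support condition is automatic, and it remains to check that for each $g \in \det(c)$ we have $\phi(c)(g) = \Delta(c)(g)$. But since $|c| = G$ we get $c_g = c \restriction g\cdot N$ with $|c_g| = g\cdot N$, so the only $c' \in Q^{g\cdot N}$ with $c' \restriction |c_g| = c_g$ is $c' = c \restriction g\cdot N$ itself; hence $q_{c,g} = \delta((c\restriction g\cdot N)\blacktriangleleft g) = \delta((c\blacktriangleleft g)\restriction N) = \Delta(c)(g)$ using Proposition~\ref{prop:neighborhood-bis}. (Incidentally this also shows $\det(c) = G$ for global $c$, so in fact $\phi \circ i = \Delta$, but I only need the $\Rightarrow$ direction.)

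For the second part, let $F : \Conf \to \Conf$ be monotonic with $F \circ i \Rightarrow \Delta$; I must show $F \Rightarrow \phi$, i.e. for every $c \in \Conf$, $F(c) \preceq \phi(c)$. So fix $c \in \Conf$ and $g \in |F(c)|$; I need $g \in |\phi(c)| = \det(c)$ and $F(c)(g) = \phi(c)(g) = q_{c,g}$. The key move is to consider, for an arbitrary completion $c' \in Q^{g\cdot N}$ with $c' \restriction |c_g| = c_g$, some global configuration $\hat{c'} \in Q^G$ extending $c'$ on $g\cdot N$ — say, pick any fixed background state and define $\hat{c'}$ to agree with $c'$ on $g\cdot N$ and to be the background elsewhere. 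Then $c_g \preceq \hat{c'}$ (because $\hat{c'}$ restricted to $|c_g| \subseteq g\cdot N$ equals $c' \restriction |c_g| = c_g$), and also $c_g \preceq c$ trivially. The delicate point is that $c$ and $\hat{c'}$ need not be comparable, so I cannot directly relate $F(c)$ and $F(\hat{c'})$; instead I use that $c_g \preceq c$ gives $F(c_g) \preceq F(c)$ by monotonicity, and $c_g \preceq \hat{c'}$ gives $F(c_g) \preceq F(\hat{c'})$. From $g \in |F(c)|$ and $F(c_g) \preceq F(c)$ I still need $g \in |F(c_g)|$ — this is where I must be careful: monotonicity of $F$ only propagates support and values \emph{upward}, so I actually want to argue the other way.

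Let me restructure the maximality argument: from $g \in |F(c)|$, I want to show $g$ is determined by $c_g$, i.e. every completion gives the same $\delta$-value, and that value is $F(c)(g)$. Given a completion $c'$ and its global extension $\hat{c'}$ as above, we have $c_g \preceq c$ and $c_g \preceq \hat{c'}$; since $g \in |F(c)|$ I would like to transfer this to $g \in |F(\hat{c'})|$. The clean way: it suffices to first show $g \in |F(c_g)|$ and $F(c_g)(g) = F(c)(g)$. But that does not follow from $F(c_g)\preceq F(c)$ alone. The honest fix is that I should \emph{not} expect $\phi$-maximality to come for free from monotonicity of a single $F$; rather, the argument must exploit that $c_g$ is the \emph{smallest} configuration whose value at $g$ is forced, and that any monotonic $F$ below $\phi$-candidates... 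Actually the correct and standard route is: assume $g\in|F(c)|$; for any two completions $c',c''$, form global extensions $\hat{c'},\hat{c''}$; then $F(c)(g)$ is sandwiched — no. I expect \textbf{the main obstacle} to be precisely this: showing that $g \in |F(c)|$ forces $g \in \det(c)$ with the matching value, which requires extending partial completions to global configurations, invoking $F\circ i \Rightarrow \Delta$ to get $F(\hat{c'})(g) \preceq \Delta(\hat{c'})(g) = \delta(c'\blacktriangleleft g)$, and combining this with monotonicity $F(c_g) \preceq F(\hat{c'})$ \emph{and} $F(c_g) \preceq F(c)$; the technical heart is justifying that $g$ lies in $|F(c_g)|$ so that these comparisons become equalities at $g$, which in turn follows by taking $c' = $ any completion, noting $c_g \preceq \hat{c'}$, $g$'s membership propagates... — this last support-propagation is the step I would need to nail down carefully, likely by observing that if $g \notin |F(c_g)|$ then one can still run the argument on $c$ directly since $c_g \preceq c$ forces nothing downward, so instead one shows $\det(c_g)\subseteq\det(c)$ cheaply and reduces to proving the claim for $c = c_g$, where $|c| = g\cdot N\cap(\text{something})$ and the completions range over a set one controls explicitly.
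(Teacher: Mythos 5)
Your first half (showing $\phi \circ i \Rightarrow \Delta$, indeed $\phi\circ i=\Delta$) is correct and matches the paper. The gap is in the maximality argument, and you have diagnosed it yourself without repairing it: you extend the completion $c'$ to a global configuration $\hat{c'}$ using an arbitrary background, which makes $\hat{c'}$ incomparable with $c$, and then you try to route the comparison through $c_g$ from below. That cannot work, because $\preceq$ only propagates support and values \emph{upward}: from $g\in|F(c)|$ and $F(c_g)\preceq F(c)$ you can conclude nothing about $|F(c_g)|$, and your fallback of reducing to the case $c=c_g$ fails for the same reason. As written, the proof of the second half is not complete.

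The missing idea is simply to choose the global extension \emph{above $c$ itself}. Given $g\in|F(c)|$ and any completion $c'\in Q^{g\cdot N}$ with $c'\restriction|c_g|=c_g$, note that $c$ and $c'$ agree on the overlap $|c|\cap g\cdot N$, so their union is a well-defined partial configuration; extend it to any $c''\in Q^G$. Then $c\preceq c''$ and $c''\restriction g\cdot N=c'$. Monotonicity now goes the right way: $F(c)\preceq F(c'')$ gives $g\in|F(c'')|$ and $F(c'')(g)=F(c)(g)$; and $F\circ i\Rightarrow\Delta$ gives $F(c'')(g)=\Delta(c'')(g)=\delta((c''\blacktriangleleft g)\restriction N)=\delta(c'\blacktriangleleft g)$ by Proposition~\ref{prop:neighborhood-bis}. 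Since $c'$ was an arbitrary completion, every completion yields the same value $F(c)(g)$, which is exactly $g\in\det(c)$ with $q_{c,g}=F(c)(g)$. This is the paper's argument (the paper phrases the completions as global $c''\succeq c$ directly, so no descent to $c_g$ is ever needed); your version only lacked the observation that the global witness can and must be taken compatible with all of $c$, not just with $c'$ on $g\cdot N$.
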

\begin{proof}
    By Definition~\ref{def:kan} of right Kan extensions, we need to prove firstly that $\phi$ is such that $\phi \circ i \Rightarrow \Delta$, and secondly that it is greater than any other such monotonic functions.
    
    For the first part, we actually have $\phi \circ i = \Delta$ since for any $c \in Q^G$, $|\phi(c)| = \det(c) = G = |\Delta(c)|$ and for any $g \in G$, we have $\phi(c)(g) = q_{c,g} = \delta(c_g \blacktriangleleft g) = \delta((c \restriction g\cdot N) \blacktriangleleft g) = \delta((c \blacktriangleleft g) \restriction N) = \Delta(c)(g)$ by Defs.~\ref{def:fine-trans},~\ref{def:det},~\ref{def:restr-neighb},~\ref{def:ca} of $\phi$, $\det$, $c_g$ and $\Delta$ and Prop.~\ref{prop:neighborhood-bis}.
    
    For the second part, let $f: \Conf \to \Conf$ be a monotonic function such that $f \circ i \Rightarrow \Delta$.
    We want to show that $f \Rightarrow \phi$, which is equivalent to:
    \begin{equation*}
        \begin{split}
            & \forall c \in \Conf, f(c) \preceq \phi(c) \text{ (Def.~\ref{def:2arrow} of $\Rightarrow$)}\\
            \iff & \forall c \in \Conf, \forall g \in |f(c)|, g \in |\phi(c)| \wedge f(c)(g) = \phi(c)(g) \text{ (Def.~\ref{def:preceq} of $\preceq$)}\\
            \iff & \forall c \in \Conf, \forall g \in |f(c)|, g \in \det(c) \wedge f(c)(g) = q_{c,g} \text{ (Def.~\ref{def:shifted-sub-local-trans} of $\phi$)}\\
            \iff & \forall c \in \Conf, \forall g \in |f(c)|, \forall c' \in Q^{g \cdot N}, c \preceq c' \implies f(c)(g) = \delta(c' \blacktriangleleft g) \text{ (D.~\ref{def:det})}
        \end{split}
    \end{equation*}
    So take $c \in \Conf$ and $g \in |f(c)|$ and $c' \in Q^{g \cdot N}$ such that $c \preceq c'$.
    Consider any $c'' \in Q^G$ such that $c' \preceq c''$ (or equivalently $c'' \restriction g\cdot N = c'$).
    Since $f$ is monotonic, we have $f(c) \preceq f(c'')$, which means that $f(c)(g) = f(c'')(g)$ by Def.~\ref{def:preceq}.
    But since $f \circ i \Rightarrow \Delta$, we have $f(c)(g) = \Delta(c'')(g) = \delta((c'' \blacktriangleleft g) \restriction N)$ by Def.~\ref{def:2arrow} of $\Rightarrow$ and Def.~\ref{def:ca} of $\Delta$.
    But by Prop.~\ref{prop:neighborhood-bis}, $(c'' \blacktriangleleft g) \restriction N = (c'' \restriction g\cdot N) \blacktriangleleft g = c' \blacktriangleleft g$.
\end{proof}

\subsection{Introducing Sub-Local Configurations}

The direct definition of the fine transition function is explicitly about assigning a result for a configuration $c$ at a given $g \in G$ even when the whole neighborhood $g\cdot N$ is not complete.
By isolating these ``shifted sub-local configuration'' in the poset of configurations, we can (right-)extend the local transition to them and show that, in the same way as the coarse transition function is the left Kan extension of the local transition function, the fine transition function is the left Kan extension of the sub-local transition function.

\subsubsection{Direct Definition}

\begin{definition}\label{def:shifted-sub-local-trans}\label{def:pi2-sub}
    We define $\Sub = \bigcup_{g \in G, M \subseteq N}(\{g\} \times Q^{g \cdot M})$ with partial order defined as $(g,c) \preceq (g',c')$ if and only if $g = g'$ and $c \preceq c'$.
    The ``fully shifted sub-local transition function'' $\underline{\delta} : \Sub \to \Conf$ is defined, for any $g \in G$, any $M \subseteq N$ and any $c \in Q^{g\cdot M}$, as $|\underline{\delta}(g,c)| = \{g\} \cap \det(c)$ and, if $g \in \det(c)$, $\underline{\delta}(g,c)(g) = q_{c',g}$, \ie $\underline{\delta}(g,c)(g) = \delta(c' \blacktriangleleft g)$ for any $c' \in Q^{g\cdot N}$ such that $c = c' \restriction |c|$.
    The second projection of $\Sub$ is the function $\pi_2 : \Sub \to \Conf$ defined as $\pi_2(g,c) = c$.
\end{definition}

In this definition, a given sub-local configuration can result either in an empty configuration when the transition is not determined, or in a configuration with only singleton support when the transition is determined.

Note that for a given cellular automaton, it is possible to restrict the poset $\Sub$ to an antichain.
Indeed, any time a result is determined by a sub-local configuration $(g,c)$, all bigger sub-local configuration $(g,c')$ with $c \preceq c'$ does not contribute anything new.
We do not elaborate on this because this antichain would be different for each cellular automaton, blurring the global picture presented below.

\subsubsection{Characterization as a Right Kan Extension}

\begin{proposition}
    The fully shifted sub-local transition function $\underline{\delta}$ is monotonic
\end{proposition}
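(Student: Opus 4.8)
The plan is to unfold Definition~\ref{def:monotonic} for $\underline{\delta}$ and reduce it to the same ``$\det$ is stable upward along $\preceq$'' statement already established for $\phi$ in Proposition~\ref{prop:monotonic-fine-trans}. First I would take $(g_0, c_0), (g_1, c_1) \in \Sub$ with $(g_0, c_0) \preceq (g_1, c_1)$; by the order on $\Sub$ (Definition~\ref{def:shifted-sub-local-trans}) this means $g_0 = g_1$, call it $g$, and $c_0 \preceq c_1$ in $\Conf$. The goal $\underline{\delta}(g, c_0) \preceq \underline{\delta}(g, c_1)$ then unfolds, via Definition~\ref{def:preceq}, to: for every $h \in |\underline{\delta}(g, c_0)| = \{g\} \cap \det(c_0)$ we have $h \in |\underline{\delta}(g, c_1)|$ and $\underline{\delta}(g, c_0)(h) = \underline{\delta}(g, c_1)(h)$. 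If $g \notin \det(c_0)$ this set is empty and there is nothing to prove, so the remaining case is $g \in \det(c_0)$, where I must show $g \in \det(c_1)$ and $\underline{\delta}(g, c_0)(g) = \underline{\delta}(g, c_1)(g)$.

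The heart of the matter is therefore the implication $g \in \det(c_0) \implies g \in \det(c_1)$, with the same determined state. One preliminary remark streamlines things: for $(g, c) \in \Sub$ we have $|c| = g \cdot M$ for some $M \subseteq N$, hence $g \cdot M \subseteq g \cdot N$ and so $c_g = c \restriction (g \cdot N \cap |c|) = c$; thus for such elements the condition defining membership in $\det$ (Definition~\ref{def:det}) reads ``$c' \restriction |c| = c$'' in place of ``$c' \restriction |c_g| = c_g$''. Now, given $g \in \det(c_0)$, I claim the witnessing state $q_{c_0,g}$ also witnesses $g \in \det(c_1)$: for any $c' \in Q^{g \cdot N}$ with $c' \restriction |c_1| = c_1$, the hypothesis $c_0 \preceq c_1$ gives $|c_0| \subseteq |c_1|$ and agreement of $c_1$ with $c_0$ on $|c_0|$, so $c' \restriction |c_0| = c_1 \restriction |c_0| = c_0$, whence $\delta(c' \blacktriangleleft g) = q_{c_0,g}$ by $g \in \det(c_0)$. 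Hence $g \in \det(c_1)$, and the uniqueness clause of Definition~\ref{def:det} forces $q_{c_1,g} = q_{c_0,g}$.

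Assembling the pieces then gives $|\underline{\delta}(g, c_0)| = \{g\} = |\underline{\delta}(g, c_1)|$ and $\underline{\delta}(g, c_0)(g) = q_{c_0,g} = q_{c_1,g} = \underline{\delta}(g, c_1)(g)$ (evaluating $\underline{\delta}$ via Definition~\ref{def:shifted-sub-local-trans}, using $c_g = c$), \ie $\underline{\delta}(g, c_0) \preceq \underline{\delta}(g, c_1)$. A shorter but more indirect variant would be to observe that $\underline{\delta}(g, c) = \phi(c) \restriction \{g\}$ for all $(g, c) \in \Sub$ (again using $c_g = c$), and deduce monotonicity of $\underline{\delta}$ from that of $\pi_2$, that of $\phi$ (Proposition~\ref{prop:monotonic-fine-trans}), and the trivial monotonicity of the operation $d \mapsto d \restriction \{g\}$ — this works precisely because $\preceq$ on $\Sub$ only compares pairs with the same first coordinate, so the $g$-dependence of the last restriction is harmless. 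I do not expect any real obstacle: the whole statement is a routine unfolding, essentially a replay of the proof of Proposition~\ref{prop:monotonic-fine-trans}; the only points worth flagging are that $\preceq$ on $\Sub$ stays within a single $g$ and that $c_g = c$ on $\Sub$, which collapses the $\det$-condition to the clean form used above.
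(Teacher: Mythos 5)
Your proof is correct and follows essentially the same route as the paper: unfold the order on $\Sub$ to reduce to the implication $g \in \det(c_0) \implies g \in \det(c_1) \wedge q_{c_0,g} = q_{c_1,g}$, then exhibit $q_{c_0,g}$ as the witness exactly as in the proof of Proposition~\ref{prop:monotonic-fine-trans} (which is what the paper's "the end of this proof is similar" refers to). Your observation that $c_g = c$ for $(g,c) \in \Sub$, and the resulting identity $\underline{\delta}(g,c) = \phi(c) \restriction \{g\}$, is a valid and slightly cleaner packaging of the same argument.
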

\begin{proof}
    As usual, take $(g,c), (g',c') \in \Sub$ such that $(g,c) \preceq (g',c')$.
    First note that $g = g'$ by Definition~\ref{def:shifted-sub-local-trans}.
    We want to prove that $\underline{\delta}(g,c) \preceq \underline{\delta}(g,c')$ and this is equivalent to:
    \begin{equation*}
        \begin{split}
            & \forall h \in |\underline{\delta}(c)|, h \in |\underline{\delta}(c')| \wedge \underline{\delta}(c)(h) = \underline{\delta}(c')(h) \\
            \iff & \forall h \in \{g\} \cap \det(c), h \in \{g\} \cap \det(c') \wedge q_{c,g} = q_{c',g} \\
            \iff & g \in \det(c) \implies g \in \det(c') \wedge q_{c,g} = q_{c',g'},
        \end{split}
    \end{equation*}
    by Definition~\ref{def:conf} of $\preceq$ and Definition~\ref{def:shifted-sub-local-trans} of $\underline{\delta}$.
    The end of this proof is similar to the one of Proposition~\ref{prop:monotonic-fine-trans}.
\end{proof}

\begin{proposition}
    The fully shifted sub-local transition function $\underline{\delta}$ is the right Kan extension of the fully shifted local transition function $\overline{\delta}$ along the inclusion $i : \Loc \to \Sub$.
\end{proposition}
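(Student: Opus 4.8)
The plan is to unfold Definition~\ref{def:kan}: being the right Kan extension of $\overline{\delta}$ along $i$ means being the $\Rightarrow$-maximum of the set $\mathcal{H} = \{\, h : \Sub \to \Conf \mid h \text{ monotonic and } h \circ i \Rightarrow \overline{\delta} \,\}$. So I would prove two things in turn: first that $\underline{\delta}$ itself belongs to $\mathcal{H}$, and second that $h \Rightarrow \underline{\delta}$ for every $h \in \mathcal{H}$.

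For membership, $\underline{\delta}$ is monotonic by the preceding proposition, so only $\underline{\delta} \circ i \Rightarrow \overline{\delta}$ remains, and I expect this to be an equality $\underline{\delta} \circ i = \overline{\delta}$: an element $(g, c) \in \Loc$ has $c \in Q^{g\cdot N}$, so the only $c' \in Q^{g\cdot N}$ with $c' \restriction |c| = c$ is $c$ itself; hence $g \in \det(c)$ with $q_{c,g} = \delta(c \blacktriangleleft g)$, which by Definition~\ref{def:shifted-sub-local-trans} gives $|\underline{\delta}(g,c)| = \{g\} = |\overline{\delta}(g,c)|$ and $\underline{\delta}(g,c)(g) = \delta(c \blacktriangleleft g) = \overline{\delta}(g,c)(g)$ by Definition~\ref{def:shifted-local-trans}.

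For maximality, take $h \in \mathcal{H}$ and $(g, c) \in \Sub$, say $c \in Q^{g \cdot M}$ with $M \subseteq N$; the goal is $h(g,c) \preceq \underline{\delta}(g,c)$. The key step is to test $h$ against all completions of $c$: for every $c' \in Q^{g\cdot N}$ with $c \preceq c'$ we have $(g, c') \in \Loc$ and $(g,c) \preceq (g, c')$ in $\Sub$ (Definition~\ref{def:shifted-sub-local-trans}), so monotonicity of $h$ gives $h(g,c) \preceq h(g,c') = (h \circ i)(g,c')$, and $h \circ i \Rightarrow \overline{\delta}$ gives $(h\circ i)(g,c') \preceq \overline{\delta}(g,c')$; by transitivity $h(g,c) \preceq \overline{\delta}(g,c')$. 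Since $|\overline{\delta}(g,c')| = \{g\}$ and $\overline{\delta}(g,c')(g) = \delta(c'\blacktriangleleft g)$, this forces $|h(g,c)| \subseteq \{g\}$, and, if moreover $g \in |h(g,c)|$, it forces $h(g,c)(g) = \delta(c' \blacktriangleleft g)$ for every such $c'$. Comparing with Definition~\ref{def:det} (using $c_g = c$ here, see below), this last statement says exactly that $g \in \det(c)$ with $q_{c,g} = h(g,c)(g)$; hence $g \in |\underline{\delta}(g,c)|$ and $h(g,c)(g) = q_{c,g} = \underline{\delta}(g,c)(g)$, so $h(g,c) \preceq \underline{\delta}(g,c)$ as wanted.

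Two points deserve care. First, for $(g,c) \in \Sub$ one has $g \cdot N \cap |c| = g\cdot N \cap g\cdot M = g\cdot(N\cap M) = g\cdot M = |c|$ since left translation by $g$ is a bijection of $G$; hence $c_g = c$ in Definition~\ref{def:restr-neighb}, which is what lets the condition ``$c' \restriction |c_g| = c_g$'' in $\det$ be read simply as ``$c \preceq c'$''. Second, and this is the only genuine subtlety, the maximality argument needs that completions $c' \in Q^{g \cdot N}$ of $c$ actually exist; this holds as soon as $Q \neq \emptyset$ (fill the positions of $g\cdot N \setminus g\cdot M$ arbitrarily), an assumption already implicit in Definition~\ref{def:det} for $q_{c,g}$ to be well-defined. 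Beyond that, the proof is a routine unfolding of the definitions, structurally parallel to the earlier characterisation of $\phi$ as the right Kan extension of $\Delta$, with $\Loc$ and completions to $Q^{g\cdot N}$ playing the roles previously played by $Q^G$ and completions to $Q^G$.
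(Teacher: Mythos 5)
Your proof is correct and follows essentially the same route as the paper: establish $\underline{\delta}\circ i \Rightarrow \overline{\delta}$ by unfolding the definitions, then prove maximality by testing a competitor $h$ against all completions $c' \in Q^{g\cdot N}$ of $c$ and using monotonicity plus $h\circ i \Rightarrow \overline{\delta}$ to recover the defining property of $\det(c)$. Your two added observations (that in fact $\underline{\delta}\circ i = \overline{\delta}$ on $\Loc$, and that $c_g = c$ for $(g,c)\in\Sub$) are correct refinements the paper leaves implicit, not a different argument.
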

\begin{proof}
    By Definition~\ref{def:kan} of right Kan extensions, we need to prove firstly that $\underline{\delta}$ is such that $\underline{\delta} \circ i \Rightarrow \overline{\delta}$, and secondly that it is greater than any other such monotonic functions.
    
    For the first part, $\underline{\delta} \circ i \Rightarrow \overline{\delta}$ is equivalent to:
    \begin{equation*}
        \begin{split}
            & \forall (g,c) \in \Loc, \underline{\delta}(g,c) \preceq \overline{\delta}(g,c) \text{ (Def.~\ref{def:2arrow} of $\Rightarrow$)}\\
            \iff & \forall (g,c) \in \Loc, \forall h \in |\underline{\delta}(g,c)|, h \in |\overline{\delta}(g,c)| \wedge \underline{\delta}(g,c)(h) = \overline{\delta}(g,c)(h) \text{ (D.~\ref{def:preceq} $\preceq$)}\\
            \iff & \forall (g,c) \in \Loc, g \in \det(c) \implies g \in |\overline{\delta}(g,c)| \wedge q_{c,g} = \overline{\delta}(g,c)(g) \text{ (Def.~\ref{def:shifted-sub-local-trans} of $\underline{\delta}$)}\\
            \iff & \forall (g,c) \in \Loc, g \in \det(c) \implies g \in \{g\} \wedge q_{c,g} = \delta(c \blacktriangleleft g) \text{ (Def~\ref{def:shifted-local-trans} of $\overline{\delta}$).}
        \end{split}
    \end{equation*}
    This last statement is true by Def.~\ref{def:det} of $q_{c,g}$.
    
    For the second part, let $f: \Sub \to \Conf$ be a monotonic function such that $f \circ i \Rightarrow \overline{\delta}$.
    We want to show that $f \Rightarrow \underline{\delta}$, which is equivalent to:
    \begin{equation*}
        \begin{split}
            & \forall (g,c) \in \Sub, f(g,c) \preceq \underline{\delta}(g,c) \text{ (Def.~\ref{def:2arrow} of $\Rightarrow$)}\\
            \iff & \forall (g,c) \in \Sub, \forall h \in |f(g,c)|, h \in |\underline{\delta}(g,c)| \wedge f(g,c)(h) = \underline{\delta}(g,c)(h) \text{ (Def.~\ref{def:preceq})}\\
            \iff & \forall (g,c) \in \Sub, \forall h \in |f(g,c)|, h \in \{g\} \cap \det(c) \wedge f(g,c)(h) = q_{c,g} \text{ (Def.~\ref{def:shifted-sub-local-trans})}
        \end{split}
    \end{equation*}
    So take $(g,c) \in \Sub$ and $h \in |f(g,c)|$.
    Consider any $c' \in \Loc$ such that $c \preceq c'$.
    Since $f$ is monotonic, we have $f(g,c) \preceq f(g,c')$, which means that $h \in |f(g,c')|$ and $f(c)(h) = f(c')(h)$ by Def.~\ref{def:preceq}.
    But since $f \circ i \Rightarrow \overline{\delta}$, we have $h \in |\overline{\delta}(g,c')| = \{g\}$ and $f(g,c')(h) = \overline{\delta}(g,c')(h) = \delta(c' \blacktriangleleft g)$ by Def.~\ref{def:2arrow} of $\Rightarrow$ and Def.~\ref{def:shifted-local-trans} of $\overline{\delta}$.
    Since this is true for any $c'$, this establishes exactly the defining property of $\det(c)$ by Def.~\ref{def:det}.
\end{proof}

\subsubsection{The Second Approach as a Left Kan Extension}

\begin{proposition}
    The projection function $\pi_2 : \Sub \to \Conf$ is monotonic.
\end{proposition}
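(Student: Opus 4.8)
The plan is simply to unwind the two definitions involved, so the "proof" is essentially a one-line verification. First I would check that $\pi_2$ is well-typed: an element of $\Sub$ is a pair $(g,c)$ with $c \in Q^{g \cdot M}$ for some $M \subseteq N$, and since $g \cdot M \subseteq G$ this $c$ is a partial function from $G$ to $Q$, hence a genuine element of $\Conf$ in the sense of Definition~\ref{def:conf}. Thus $\pi_2(g,c) = c \in \Conf$ is meaningful, and $\pi_2 : \Sub \to \Conf$ is a well-defined function of the right type.

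Then, to establish monotonicity, I would take two elements $(g,c), (g',c') \in \Sub$ with $(g,c) \preceq (g',c')$ and appeal directly to Definition~\ref{def:pi2-sub}: by the very definition of the partial order on $\Sub$, this comparison holds precisely when $g = g'$ and $c \preceq c'$ in $\Conf$. The required conclusion $\pi_2(g,c) \preceq \pi_2(g',c')$ becomes, after substituting $\pi_2(g,c) = c$ and $\pi_2(g',c') = c'$, literally the second conjunct $c \preceq c'$, which is among our hypotheses. Hence $\pi_2$ is monotonic by Definition~\ref{def:monotonic}.

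There is essentially no obstacle here. The only point worth flagging in the writeup — and the reason it is worth stating at all rather than dismissing as "easy" — is that the order on $\Sub$ was deliberately chosen (in contrast with the trivial order on $\Loc$, where the analogous $\pi_2$ was monotonic only vacuously) so as to refine the preimage of $\preceq$ along $\pi_2$, forcing any two comparable elements of $\Sub$ to share the same center $g$. This design choice is exactly what will let $\phi$ be recovered as the left Kan extension of $\underline{\delta}$ along $\pi_2$ in the following result, so it deserves to be made explicit even though the monotonicity claim itself is immediate.
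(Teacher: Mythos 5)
Your proof is correct and is exactly the unwinding the paper's one-line proof (``as can be readily checked in Definition~\ref{def:shifted-sub-local-trans}'') leaves implicit: the order on $\Sub$ requires $c \preceq c'$, which is precisely the conclusion after applying $\pi_2$. No difference in approach, just a welcome bit of added explicitness.
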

\begin{proof}
    As can be readily checked in Definition~\ref{def:shifted-sub-local-trans}
\end{proof}

\begin{proposition}
    $\phi$ is the left Kan extension of $\underline{\delta}$ along $\pi_2 : \Sub \to \Conf$.
\end{proposition}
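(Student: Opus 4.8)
The plan is to check the two halves of Definition~\ref{def:kan} for a left Kan extension: that $\underline{\delta} \Rightarrow \phi \circ \pi_2$, and that $\phi$ is the $\Rightarrow$-smallest monotonic function $F : \Conf \to \Conf$ satisfying $\underline{\delta} \Rightarrow F \circ \pi_2$. The whole argument runs parallel to the proof that $\Phi$ is the left Kan extension of $\overline{\delta}$ along $\pi_2 : \Loc \to \Conf$; the one genuinely new ingredient is that the set $\det(c)$ and the states $q_{c,g}$ depend on $c$ only through $c_g$, together with the identity $(c_g)_g = c_g$, which holds because $|c_g| = g\cdot N \cap |c| \subseteq g\cdot N$.

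For the first half, I would fix $(g,c) \in \Sub$, so that $c \in Q^{g\cdot M}$ with $M \subseteq N$ and hence $c_g = c$. Unfolding $\Rightarrow$ and $\pi_2$ (Defs.~\ref{def:2arrow} and~\ref{def:pi2-sub}), the claim reduces to $\underline{\delta}(g,c) \preceq \phi(c)$. When $g \notin \det(c)$ the support $|\underline{\delta}(g,c)| = \{g\} \cap \det(c)$ is empty and the comparison is vacuous; when $g \in \det(c)$, both $\underline{\delta}(g,c)$ and $\phi(c)$ are defined at $g$ with the common value $q_{c,g}$ by Definitions~\ref{def:shifted-sub-local-trans} and~\ref{def:fine-trans}, and $g \in \det(c) = |\phi(c)|$. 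So this half is immediate.

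For the second half, given a monotonic $F$ with $\underline{\delta} \Rightarrow F \circ \pi_2$, I want $\phi(c) \preceq F(c)$ for every $c \in \Conf$. Fixing $c$ and $g \in |\phi(c)| = \det(c)$, the idea is to restrict $c$ to the relevant shifted sub-local piece $c_g$ (Definition~\ref{def:restr-neighb}): one checks that $(g, c_g) \in \Sub$ — its support $g\cdot N \cap |c|$ has the form $g\cdot M'$ with $M' = \{\, n \in N \mid g\cdot n \in |c|\,\} \subseteq N$, since left multiplication by $g$ is a bijection of $G$ — and that $c_g \preceq c$. Then monotonicity of $F$ gives $F(c_g) \preceq F(c)$, while the hypothesis $\underline{\delta} \Rightarrow F \circ \pi_2$ gives $\underline{\delta}(g, c_g) \preceq F(\pi_2(g,c_g)) = F(c_g)$. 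Since $\det$ and the $q_{\cdot,g}$ only see $c_g$, from $g \in \det(c)$ we obtain $g \in \det(c_g)$ and $q_{c_g,g} = q_{c,g}$, so $\underline{\delta}(g,c_g)$ is defined at $g$ with value $q_{c,g}$. Transitivity then yields $\underline{\delta}(g,c_g) \preceq F(c)$, that is, $g \in |F(c)|$ and $F(c)(g) = q_{c,g} = \phi(c)(g)$, which is exactly what is needed.

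The only subtle step is the factorisation of $\det$ and the states $q_{c,g}$ through the assignment $c \mapsto c_g$, plus the idempotency $(c_g)_g = c_g$; once this is in hand, everything else is the familiar ``pull back to the sub-local data, apply the universal hypothesis, push forward by monotonicity'' pattern already used for $\Phi$. A minor bookkeeping point to get right along the way is confirming that $(g, c_g)$ genuinely lies in $\Sub$ rather than merely in $\Conf$.
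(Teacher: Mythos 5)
Your proof is correct and follows essentially the same route as the paper's: verify $\underline{\delta} \Rightarrow \phi \circ \pi_2$ directly, then for minimality pull back to $(g, c_g) \in \Sub$, apply the hypothesis $\underline{\delta} \Rightarrow F \circ \pi_2$ there, and push forward by monotonicity and transitivity. You are in fact slightly more careful than the paper on the step where one needs $g \in \det(c_g)$ and $q_{c_g,g} = q_{c,g}$ (i.e., that $\det$ and the $q_{\cdot,g}$ factor through $c \mapsto c_g$ with $(c_g)_g = c_g$), which the paper's proof uses implicitly.
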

\begin{proof}
    By Definition~\ref{def:kan} of left Kan extensions, we need to prove firstly that $\phi$ is such that $\underline{\delta} \Rightarrow \phi \circ \pi_2$, and secondly that it is smaller than any other such monotonic functions.
    
    For the first part, $\underline{\delta} \Rightarrow \phi \circ \pi_2$ is equivalent to:
    \begin{equation*}
        \begin{split}
            & \forall (g,c) \in \Sub, \underline{\delta}(g,c) \preceq \phi(c) \text{ (Defs.~\ref{def:2arrow} and~\ref{def:pi2-sub} of $\Rightarrow$ and $\pi_2$)}\\
            \iff & \forall (g,c) \in \Sub, \forall h \in |\underline{\delta}(g,c)|, h \in |\phi(c)| \wedge \underline{\delta}(g,c)(h) = \phi(c)(h) \text{ (Def~\ref{def:preceq} of $\preceq$)}\\
            \iff & \forall (g,c) \in \Sub, g \in \det(c) \implies g \in |\phi(c)| \wedge q_{c,g} = \phi(c)(g) \text{ (Def.~\ref{def:shifted-sub-local-trans} of $\underline{\delta}$)}\\
            \iff & \forall (g,c) \in \Sub, g \in \det(c) \implies g \in \det(c) \wedge q_{c,g} = q_{c,g} \text{ (Def~\ref{def:fine-trans} of $\phi$)},
        \end{split}
    \end{equation*}
    a most trivial statement.
    
    For the second part, let $f: \Conf \to \Conf$ be a monotonic function such that $\underline{\delta} \Rightarrow f \circ \pi_2$.
    We want to show that $\phi \Rightarrow f$, which is equivalent to:
    \begin{equation*}
        \begin{split}
            & \forall c \in \Conf, \phi(c) \preceq f(c) \text{ (Def.~\ref{def:2arrow} of $\Rightarrow$)}\\
            \iff & \forall c \in \Conf, \forall g \in |\phi(c)|, g \in |f(c)| \wedge \phi(c)(g) = f(c)(g) \text{ (Def.~\ref{def:preceq} of $\preceq$)}\\
            \iff & \forall c \in \Conf, \forall g \in \det(c), g \in |f(c)| \wedge q_{c,g} = f(c)(g) \text{ (Def.~\ref{def:fine-trans} of $\phi$)}
        \end{split}
    \end{equation*}
    So take $c \in \Conf$ and $g \in \det(c)$.
    Since $c_g \preceq c$ (Def~\ref{def:restr-neighb}) and $f$ is monotonic, we have $f(c_g) \preceq f(c)$.
    Moreover $\underline{\delta} \Rightarrow f \circ \pi_2$ and $(g, c_g)  \in \Sub$ so $\underline{\delta}(g, c_g) \preceq f(c_g)$ by Definitions~\ref{def:2arrow} and~\ref{def:pi2-loc} of $\Rightarrow$ and $\pi_2$.
    By transitivity $\underline{\delta}(g, c_g) \preceq f(c)$.
    By Definition~\ref{def:shifted-sub-local-trans} of $\underline{\delta}$ and Definition~\ref{def:preceq} of $\preceq$, we therefore have $g \in |f(c)|$, and $f(c)(g) =  \underline{\delta}(g, d_g)(g) = q_{c,g}$ as wanted.
\end{proof}

\section{Final Discussion}

There are additional simple structural facts to note about the monotonic functions considered.
The first one is that the shift action on partial configurations, as given in Definition~\ref{def:conf}, is the right Kan extension of the shift action on global configurations, as given in Definition~\ref{def:ca}.
Another one is that $\Phi \Rightarrow \phi$, hence the names of these transition functions, coarse and fine.
In fact, any monotonic function $f : \Conf \to \Conf$ such that $\overline{\delta} \Rightarrow f \circ \pi_2$ is necessarily such that $\Phi \Rightarrow f \Rightarrow \phi$.
This shows, in some sense, the efficiency of the simple constraints of monotonicity and $\overline{\delta} \Rightarrow f \circ \pi_2$.

In the formal development presented here, we explicitly ``copy'' a single local behavior $\delta$ on all $g \in G$ to obtain $\overline{\delta}$ and work with it.
It is readily possible to put a different behavior on each $g \in G$, with no real modification to the proofs.
The statements are therefore valid for non-uniform cellular automata and automata networks.
As mentioned in the beginning, we did not even used the finiteness of the neighborhood either.
At this point, the reader might have the feeling that these results are not really about cellular automata, and there are at least three answers to that.
The first answer is that one could easily impose the shift and simultaneously prevent the use of a highly redundant ``fully shifted local transition function'', but this requires using a category of configuration instead of a poset of configurations.
The latter is very similar to the poset, except that the yes/no question ``is this configuration a subconfiguration of this other one ?'' is replaced by the open-ended question ``where does this configuration appear in this other one ?''~\cite{DBLP:conf/gg/MaignanS15}.
The goal of this paper is indeed to introduce the concepts needed for this other point of view, among many others.
The second answer is that the proofs are more about the decomposition/composition process involved in the local/global definition of cellular automata.
Because of the simplicity of cellular spaces, groups, the description is very simple to make ``directly''.
In other situations, a Kan extension presentation can be the most effective way to describe the spatial extensions/restriction, for example when the space is an evolving graph~\cite{DBLP:conf/gg/MaignanS15,DBLP:conf/uc/FernandezMS19}.
The third answer is that, with small modifications, this result is closely related to the Curtis-Hedlung theorem.
Indeed, if one restores the finite neighborhoods and finite states constraints, one can see that the poset of finite support configurations is a ``generating'' part of the poset of open subsets of the product topology.
In this case, the fine transition function $\phi$ can be viewed as encoding an important part of the topological behavior of the global transition function $\Delta$~\cite{CeccheriniSilberstein2009}.

To finish, let us mention an important aspect of the Kan extensions considered here and in other papers of the author~\cite{DBLP:conf/gg/MaignanS15,DBLP:conf/uc/FernandezMS19}.
They have the property to be \emph{pointwise}.
Intuitively, this means that they can be computed ``algorithmically'' using simple building blocks.
This formulation in terms of building blocks is completely equivalent and is the one used in the other papers, firstly because it is via these building blocks that the authors discovered these links between spatially-extended dynamical systems and category theory, and secondly because this formulation is closer to the software implementations of the considered models.
In fact, it is possible to have an implementation completely generic over the particular kind of space considered, \eg evolving graphs of any sort, evolving higher-order structures such as abstract cell~\cite{DBLP:conf/gg/MaignanS15}, evolving strings such as Lindenmayer systems~\cite{DBLP:conf/uc/FernandezMS19}, or Cayley graphs as considered here.

\bibliography{main}

\end{document}